\newtheorem{theorem}{Theorem}
\newtheorem*{theorem*}{Theorem}
\newtheorem{corollary}[theorem]{Corollary}
\newtheorem{lemma}[theorem]{Lemma}
\newtheorem{observation}{Observation}
\newtheorem{proposition}[theorem]{Proposition}
\newcommand{\gbar}{\overline{G}}
\title{Optimal antithickenings of claw-free trigraphs}
\author{Maria Chudnovsky\thanks{Supported by NSF grants DMS-1001091 and IIS-1117631.}}
\affil{Department of Industrial Engineering and Operations Research\\Columbia University, New York NY}
\author{Andrew D.~King\thanks{Corresponding author.  Email: andrew.d.king@gmail.com.  Supported by an EBCO/Ebbich Postdoctoral Scholarship and the NSERC Discovery Grants of Pavol Hell and Bojan Mohar.}}
\affil{Departments of Mathematics and Computing Science\\Simon Fraser University, Burnaby BC}
\newenvironment{enumerate*}{
\begin{enumerate}
  \setlength{\itemsep}{5pt}
  \setlength{\parskip}{0pt}
  \setlength{\parsep}{0pt}
}{\end{enumerate}}
\newenvironment{itemize*}{
\begin{itemize}
  \setlength{\itemsep}{5pt}
  \setlength{\parskip}{0pt}
  \setlength{\parsep}{0pt}
}{\end{itemize}}
\newenvironment{proof*}[1][\proofname]{%
\proof
}{\endproof}
\begin{document}

\maketitle

\begin{abstract}
Chudnovsky and Seymour's structure theorem for claw-free graphs has led to a multitude of recent results that exploit two structural operations:  {\em compositions of strips} and {\em thickenings}.
In this paper we consider the latter, proving that every claw-free graph has a unique optimal {\em antithickening}, where our definition of {\em optimal} is chosen carefully to respect the structural foundation of the graph. Furthermore, we give an algorithm to find the optimal antithickening in $O(m^2)$ time.  For the sake of both completeness and ease of proof, we prove stronger results in the more general setting of trigraphs.
\end{abstract}


\newpage

\section{Claw-free and quasi-line graphs and trigraphs}

The structural characterization of claw-free graphs \cite{clawfree5} and quasi-line graphs \cite{clawfree7} has led to a recent explosion of results of the same general type:  Take some statement $\mathcal A$ that we know to hold for line graphs and circular interval graphs.  Do we know that $\mathcal A$ holds for quasi-line graphs?  What about claw-free graphs?  And if $\mathcal A$ guarantees the existence of a certain combinatorial object (e.g.\ a $\lceil \frac 32\omega(G)\rceil$-colouring of a quasi-line graph $G$ \cite{chudnovskyo07}), how efficiently can we find such an object?

For quasi-line graphs, the structure theorem essentially tells us that all quasi-line graphs can be built in the following way:
\begin{enumerate*}
\item[i)] Take a circular interval graph or a {\em composition of linear interval strips}, which is constructed by ``replacing'' every edge of a multigraph with a linear interval graph (in the same way that we would replace every edge with a vertex when constructing a line graph).
\item[ii)] Take a matching such that removing a submatching does not change the fact that the graph is quasi-line, and expand the endpoints of each edge of the matching into a {\em homogeneous pair of cliques}.
\end{enumerate*}

For claw-free graphs the situation is more complicated, but at its heart very similar (at least when $\alpha\geq 4$).

The standard (and seemingly almost universal) proof method for quasi-line graphs can be roughly described as follows.  First, prove that a minimum counterexample contains no {\em nonlinear homogeneous pair of cliques}.  Second, note that since the statement holds for circular interval graphs, the structure theorem for quasi-line graphs \cite{clawfree7} tells us that a minimum counterexample must be a composition of linear interval strips.  Third, find a {\em linear interval strip decomposition} of a supposed minimum counterexample and, using line graphs as a base case, apply induction on the size of the graph.  Again, for claw-free graphs the approach is more complicated but fundamentally the same.

In the past few years, this approach has yielded results that include, among others,
\begin{enumerate*}
\item Any quasi-line graph with chromatic number at least $k$ contains $K_k$ as a minor (Hadwiger's conjecture holds for quasi-line graphs) \cite{chudnovskyo08}.
\item We can find a maximum-weight independent set in a quasi-line graph in $O(n^3)$ time \cite{faenzaos11}.
\item The facet-defining inequalities of the stable set polytope of a quasi-line graph fall into several well-understood categories \cite{cssurvey, eisenbrandosv05}.
\item The problem {\em Minimum Dominating Set} is fixed-parameter tractable for quasi-line graphs \cite{hermelinmlw11}.
\item Any quasi-line graph with chromatic number at least $k$ contains a vertex $v$ for which $\lceil \frac 12 (d(v)+2+\omega(G[N(v)])\rceil \geq k$ (the {\em local strengthening} of Reed's conjecture holds for quasi-line graphs) \cite{chudnovskykps11,kingthesis}.
\item Any quasi-line graph can be coloured in polynomial time using at most $\chi_f+3\sqrt{\chi_f}$ colours \cite{kingr11}.
\end{enumerate*}
Statements 2, and 4, as well as weakenings of the other statements, are known to hold for claw-free graphs \cite{chudnovskyo08, gallucciogv10, kingthesis, kingr11}.

Any algorithmic application of this approach requires the ability to do two things efficiently: find and reduce nonlinear homogeneous pairs of cliques, and find a linear interval strip decomposition of any composition of linear interval strips.  In this paper we concern ourselves with the first issue.  The fact that we can do this in polynomial time was first noted independently by King and Reed \cite{kingr08} and Oriolo, Pietropaoli, and Stauffer \cite{oriolops08}.  In this paper we improve upon these results by giving a faster algorithm, in the more general setting of trigraphs, and proving that there is a unique optimal reduction.  As it turns out, we need to consider trigraphs in order to sensibly define and prove the uniqueness of {\em optimal antithickenings}.  Faenza, Oriolo, and Snels \cite{faenzaos11b} independently proved a similar result in the setting of graphs using a multi-step reduction, without considering optimality or uniqueness.  One consequence of our one-step reduction is that the running time of the algorithmic results in Corollaries 21 and 22 of \cite{faenzaos11b} can easily be improved by a factor of $m$.

We now state the main result of the paper, while deferring formal definitions until the end of this section.  We also state a corollary that some readers may find more digestible and useful.  As usual, $m$ and $n$ denote the number of edges and vertices in a graph (or trigraph, in which case $m$ is the number of adjacent unordered vertex pairs) respectively.

\begin{theorem}\label{thm:anti}
Let $G$ be a quasi-line trigraph that is not cobipartite, or a claw-free trigraph with $\alpha(G)\geq 3$.  Then $G$ has a unique optimal antithickening, and we can find it in $O(m^2)$ time.
\end{theorem}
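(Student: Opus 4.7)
The plan is to reduce the theorem to (i) a local mergeability criterion for pairs of vertices and (ii) a lattice-theoretic uniqueness argument, and then to implement the criterion efficiently. First I would make precise, in the trigraph language, what it means for a set $W \subseteq V(G)$ to be the preimage of a single vertex under a thickening: $W$ should be an internally structured subset (with a controlled pattern of semi-adjacent edges inside) that is \emph{homogeneous} with respect to $V(G)\setminus W$, so that every outside vertex is strongly complete or strongly anticomplete to $W$. An antithickening then corresponds to a partition of $V(G)$ into such admissible sets, and \emph{optimal} means the partition having the fewest parts.

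For uniqueness I would aim to show that the set $\mathcal A(G)$ of partitions realising a valid antithickening of $G$ is closed under the partition-join: if $P_1, P_2 \in \mathcal A(G)$ then $P_1 \vee P_2 \in \mathcal A(G)$. Because the partition lattice of $V(G)$ is finite, closure under join forces the existence of a unique coarsest element, which is necessarily the unique optimal antithickening. Join-closure itself reduces to the local claim that whenever admissible classes $A \in P_1$ and $B \in P_2$ satisfy $A \cap B \neq \emptyset$, the union $A \cup B$ is again admissible. Here the hypothesis that $G$ is not cobipartite, respectively $\alpha(G)\geq 3$, is essential: it lets one convert any outside vertex whose attachment to $A \cup B$ is mixed into an induced claw or into an obstruction to $G$ being quasi-line, contradicting the hypothesis and forcing homogeneity.

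For the algorithmic part, once join-closure is established, the relation $u \sim v$ meaning ``$u$ and $v$ lie in a common class of some valid antithickening'' is an equivalence relation whose classes are precisely the parts of the optimal antithickening. Moreover, $u \sim v$ can be decided by a purely local test comparing the neighbourhoods of $u$ and $v$ and the semi-adjacent edges incident to them; this test runs in $O(d(u)+d(v))$ time per pair. One observes that only pairs sharing a neighbour or an appropriate boundary vertex can possibly be equivalent, which bounds the number of candidate pairs so that the aggregate running time is $O(m^2)$. The optimal antithickening is then extracted as the connected components of the mergeability graph.

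The main obstacle is clearly the join-closure lemma: it is the step that genuinely invokes the structural hypothesis and it is where the trigraph bookkeeping around semi-adjacent edges has to be done carefully. I expect a short case analysis distinguishing (a) outside vertices mixed on $A \cup B$, (b) interactions of semi-adjacent edges straddling $A$ and $B$, and (c) a terminal case where $A \cup B$ nearly exhausts $V(G)$, with the hypothesis $\alpha(G) \geq 3$ or non-cobipartiteness ruling out each obstruction. Once this lemma is available, uniqueness and the $O(m^2)$ algorithm follow from standard partition-refinement bookkeeping.
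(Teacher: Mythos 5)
The central gap is that your notion of \emph{optimal} does not match the one the theorem is about, and the uniqueness argument collapses once the correct definition is used. In the paper an optimal antithickening $G'$ must first be \emph{laminar} (contain no square-connected homogeneous pair of strong cliques), and subject to that $|V(G')|$ must be \emph{maximum} --- i.e.\ one contracts as little as possible while still eliminating all nonlinear homogeneous pairs. You instead ask for the partition with the fewest parts, which is the opposite extreme, and you omit the laminarity constraint entirely; without it the identity antithickening trivially maximises the number of vertices and the coarsest partition is not the object of interest. Consequently the join-closure/lattice argument is aimed at the wrong element of the lattice: the optimal antithickening is the \emph{finest laminar} one, and the family of laminar antithickenings is not closed under refinement (meet), since refining a laminar antithickening generally reintroduces square-connected pairs --- $G$ itself is the finest antithickening and is typically not laminar --- so no off-the-shelf closure argument yields existence or uniqueness. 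What actually does the work in the paper is an intersection analysis of square-connected homogeneous pairs of strong cliques: in a non-degenerate trigraph two deletion-minimal pairs cannot have skew intersection (this is where non-cobipartiteness and $\alpha\ge 3$ enter, in a spirit close to your claw argument), and the union of two non-skew intersecting square-connected pairs is again square-connected; hence the maximal such pairs are pairwise disjoint and canonical, and contracting exactly these gives the unique optimal antithickening (uniqueness is then finished by induction, contracting one such pair at a time).

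The algorithmic step is also unsupported. Whether $u$ and $v$ are identified in the optimal antithickening amounts to their lying in a common square-connected homogeneous pair of strong cliques, which is a global property: the paper grows the minimal such pair containing a seed edge by an iterative closure procedure costing $O(m)$ per seed pair, for $O(m^2)$ overall, and its concluding example (two cliques joined by a perfect matching) has $\Theta(m)$ distinct minimal square-connected pairs, showing this cost is intrinsic to any such direct approach. A test that only compares the neighbourhoods of $u$ and $v$ in $O(d(u)+d(v))$ time cannot decide this, and the relation ``$u$ and $v$ lie in a common class of \emph{some} valid antithickening'' is far too coarse (almost any homogeneous pair of strong cliques can be contracted in some antithickening) and is not obviously transitive without precisely the skew-intersection analysis that your sketch skips.
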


\begin{corollary}\label{cor:anti}
Let $G$ be a quasi-line graph that is not cobipartite.  Then in $O(m^2)$ time we can find a set of disjoint homogeneous pairs of cliques $\{(A_i,B_i)\}_{i=1}^k$ with the following property: If we construct a graph $G'$ from $G$ by contracting each $(A_i,B_i)$ down to an edge $a_ib_i$ and removing any subset of these new edges, then $G'$ is a circular interval graph or a composition of linear interval strips.
\end{corollary}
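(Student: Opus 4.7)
The plan is to invoke Theorem~\ref{thm:anti} on $G$ regarded as a trigraph with no semi-edges, and then translate the structural conclusion for the resulting antithickening into the graph setting. First I would apply Theorem~\ref{thm:anti} to produce, in $O(m^2)$ time, the unique optimal antithickening $H$ of $G$, together with the partition $\{X_v : v \in V(H)\}$ of $V(G)$ into the cliques that thicken back to $G$. For each semi-edge $uv$ of $H$, the pair $(X_u, X_v)$ is by definition a homogeneous pair of cliques in $G$; collecting these over all semi-edges of $H$ gives the desired family $\{(A_i, B_i)\}_{i=1}^k$, and the family is pairwise disjoint since the cliques $X_v$ are.

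Next I would argue that $H$ itself falls into one of the two named structural classes. Because $G$ is quasi-line and not cobipartite, $H$ inherits these properties: antithickening is a structural reduction and cannot create an induced claw or a cobipartite structure where there was none. Optimality of the antithickening forbids any further reduction, and in particular forces $H$ to contain no nonlinear homogeneous pair of cliques. Applying the structure theorem for quasi-line trigraphs from \cite{clawfree7}, $H$ is therefore either a circular interval trigraph or a composition of linear interval strips.

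Then I would identify the graph $G'$ in the statement of the corollary with a realization of $H$. Contracting each $(A_i, B_i)$ to an edge $a_ib_i$ collapses $X_u$ to $a_i$ and $X_v$ to $b_i$, turning the semi-edge $uv$ of $H$ into the edge $a_ib_i$ while leaving all other edges and non-edges intact; subsequently deleting any subset of the new edges amounts to realizing the corresponding semi-edges of $H$ as non-edges. Thus $G'$ is nothing more than a realization of $H$ as a graph, produced by making independent ``edge or non-edge'' choices at the semi-edges.

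The main obstacle, and the last remaining step, is to verify that every realization of a circular interval trigraph is a circular interval graph, and likewise that every realization of a composition of linear interval strips is a composition of linear interval strips. For a circular interval trigraph the semi-edges appear only between the two endpoints of consecutive intervals on the host circle, and either choice for such a semi-edge remains compatible with the very same circular interval representation. For a composition of linear interval strips, semi-edges occur either inside a single strip (where the linear interval argument applies) or at a strip interface, where again either choice is consistent with the strip decomposition. In both cases the realization $G'$ is of the promised form, which completes the proof of the corollary.
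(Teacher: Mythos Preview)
Your approach is the intended one: the paper states Corollary~\ref{cor:anti} without proof, as an immediate consequence of Theorem~\ref{thm:anti} combined with the structure theorem for quasi-line trigraphs from \cite{clawfree7}, and that is precisely what you do.

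Two small points worth tightening. First, your identification of $G'$ with a realization of $H$ tacitly uses that $|X_v|=1$ whenever $v$ is not incident to a semiedge of $H$. This is true---it follows from the maximality of $|V(H)|$ (or directly from the algorithm in the proof of Theorem~\ref{thm:mainantithickening}, which only contracts square-connected HPOSCs)---but you should say so explicitly, since otherwise $G'$ would be a realization of $H$ with some vertices further blown up into homogeneous cliques. Second, your assertion that $H$ inherits the quasi-line and non-cobipartite properties from $G$ is correct but deserves a one-line justification: a thickening of a cobipartite trigraph is cobipartite, and a claw (or non-cobipartite neighbourhood) in $H$ lifts to one in $G$ by choosing representatives, using that semiedges form a matching. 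With these clarifications your argument is complete.
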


Cobipartite graphs are the degenerate subclass of quasi-line graphs in the same sense that graphs with $\alpha\leq 2$ are the degenerate subclass of claw-free graphs.  The fundamental reason for this is the fact that these degenerate graph classes are closed under the join operation.

To close this section, we give definitions of trigraphs, quasi-line and claw-free trigraphs, and homogeneous pairs of strong cliques.  In the next section, we define three useful types of homogeneous pairs of cliques, and analyze how they can intersect one another in a claw-free trigraph.  In Section \ref{sec:antithickening} we define thickenings, antithickenings, and optimal antithickenings, then prove that a non-degenerate claw-free trigraph has a unique optimal antithickening, and that we can find it efficiently.

\subsection{Trigraphs and homogeneous pairs of strong cliques}\label{sec:trigraphs}

Trigraphs are more general objects than graphs, and essentially differ from graphs by incorporating the possibility of two vertices being ``semiadjacent''.  They were first introduced in the first author's Ph.D.\ thesis \cite{chudnovskythesis}, and proved to be very useful in describing the structure of quasi-line and claw-free graphs \cite{clawfree7, clawfree5}.  Here we define trigraphs and generalize a variety of definitions from graphs to trigraphs; most of these generalizations are very natural.

A {\em trigraph} $G$ consists of a vertex set $V(G)$ and a function $\theta_G:V(G)^2 \rightarrow \{1,0,-1\}$ that defines the adjacency, with the properties
\begin{itemize*}
\item for any $v\in (G)$, $\theta_G(v,v)=0$,
\item for any $u,v\in V(G)$, $\theta_G(u,v)=\theta_G(v,u)$, and
\item for any distinct $u,v,w\in V(G)$, if $\theta_G(u,v)=0$ then $\theta_G(u,w)\neq 0$.
\end{itemize*}
Given vertices $u,v\in V(G)$, we say that $u$ and $v$ are {\em strongly adjacent} if $\theta_G(u,v)=1$, {\em adjacent} if $\theta_G(u,v)\in \{1,0\}$, {\em semiadjacent} if $\theta_G(u,v)=0$, {\em antiadjacent} if $\theta_G(u,v)\in \{0,-1\}$, and {\em strongly antiadjacent} if $\theta_G(u,v)=-1$.  The third property of $\theta_G$ tells us that semiadjacent pairs of distinct vertices (i.e.\ {\em semiedges}) form a matching.

The {\em complement of $G$}, denoted $\gbar$, is the trigraph on vertex set $V(G)$ in which for any $u,v\in V(G)$, $\theta_{\gbar}(u,v) = -\theta_{\gbar}(u,v)$.

Given disjoint vertex sets $A$ and $B$, we say that $A$ and $B$ are {\em complete} (resp.\ {\em strongly complete}) if for every $u\in A$ and $v\in B$, $u$ and $v$ are adjacent (resp.\ strongly adjacent).  We say that $A$ and $B$ are {\em anticomplete} (resp.\ {\em strongly anticomplete}) if for every $u\in A$ and $v\in B$, $u$ and $v$ are antiadjacent (resp.\ strongly antiadjacent).  If a vertex $v$ is strongly complete to $V(G)\setminus \{v\}$, we say $v$ is {\em universal}.  The {\em neighbourhood} (resp.\ {\em strong neighbourhood}) of a vertex $v$ is the set of vertices in $V(G)\setminus\{v\}$ that are adjacent (resp.\ strongly adjacent) to $v$.  The {\em antineighbourhood} (resp.\ {\em strong antineighbourhood}) of $v$ is the set of vertices in $V(G)\setminus\{v\}$ that are antiadjacent (resp.\ strongly antiadjacent) to $v$.  We use $N(v)$ to denote the neighbourhood of $v$.

A {\em clique} (resp.\ {\em strong clique}) is a set of pairwise adjacent (resp.\ strongly adjacent) vertices.  A {\em stable set} (resp.\ {\em strong stable set}) is a set of pairwise antiadjacent (resp.\ strongly antiadjacent) vertices.  The {\em stability number} of a trigraph $G$, denoted $\alpha(G)$, is the size of a maximum stable set.

Given distinct vertices $v_1,\ldots, v_4$ in a trigraph $G$, we say that $v_1v_2v_3v_4$ is a {\em square} (or a $C_4$) if the pairs $v_1v_3$ and $v_2v_4$ are antiadjacent, and the other four pairs are adjacent.  We say that there is a {\em claw} at a vertex $v$ if the neighbourhood of $v$ contains a stable set of size 3, and if $G$ contains no claw it is {\em claw-free}.  We say that $G$ is {\em cobipartite} if its vertices can be covered by two strong cliques, and if the neighbourhood of every vertex is cobipartite then $G$ is {\em quasi-line}.

For the remainder of the paper, we define a {\em non-degenerate} trigraph as a trigraph that is quasi-line and non-cobipartite, or claw-free with $\alpha \geq 3$; otherwise we say that the trigraph is {\em degenerate}.

A trigraph $G$ is {\em connected} if for any two distinct vertices $v$ and $v'$ there is a sequence $v=v_0,v_1,\ldots, v_k=v'$ such that for $0\leq i<k$, $v_i$ is adjacent to $v_{i+1}$.  In this paper we will restrict our attention to connected trigraphs.

A set $A$ of vertices in a trigraph $G$ is a {\em homogeneous set} if the following hold:
\begin{enumerate*}
\item $|V| > |A|\geq 2$.
\item Every vertex outside $A$ is either strongly complete or strongly anticomplete to $A$.
\end{enumerate*}
\noindent If the set $A$ is a strong clique, we say it is a {\em homogeneous strong clique}.

Suppose $A$ and $B$ are disjoint nonempty strong cliques in a trigraph $G$.  Then we say that $(A,B)$ is a {\em homogeneous pair of strong cliques} (HPOSC) if the following hold:
\begin{enumerate*}
\item $A$ and $B$ are not both singletons.
\item Every vertex outside $A\cup B$ is either strongly complete or strongly anticomplete to $A$, and is either strongly complete or strongly anticomplete to $B$.
\end{enumerate*}
As we will discuss in Section \ref{sec:antithickening}, the {\em thickening} operation expands semiedges in a trigraph into homogeneous pairs of strong cliques.

\section{Two types of homogeneous pairs of strong cliques}\label{sec:homo}

In this section we discuss two useful classes of HPOSC.  We define one now and the other later in the section.  Let $(A,B)$ be a HPOSC in a trigraph $G$.  We say that $(A,B)$ is {\em deletion-minimal} if $G|(A\cup B)$ contains a square, and no vertex in $A$ (resp.\ $B$) is strongly complete or strongly anticomplete to $B$ (resp.\ $A$).  This is the trigraph version of what is sometimes called a {\em proper} homogeneous pair of cliques \cite{faenzaos11}.  Our results on antithickenings of claw-free trigraphs rely on the fact that certain types of deletion-minimal HPOSCs intersect in a nice orderly way.  Before we get to proving this, we need a technical lemma.  Given a vertex set $X$ in a trigraph, we use $X^C$ to denote $V(G)\setminus X$.

\begin{lemma}\label{lem:homogeneousset}
Suppose a connected claw-free trigraph $G$ contains a homogeneous set $X$ that is not a strong clique.  Then $X$ is strongly complete to $X^C$, and $\alpha(G)=2$.  Furthermore if $G$ is quasi-line then $G|X$ is cobipartite.
\end{lemma}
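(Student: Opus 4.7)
The plan is to exploit the fact that $X$ is not a strong clique by extracting two antiadjacent vertices $u, v \in X$, which will seed a forbidden claw whenever combined with a common strong neighbour of $u, v$ together with a third vertex antiadjacent to both. First I would use connectedness together with $|V| > |X|$ to find at least one $w \in X^C$ strongly complete to $X$: by homogeneity every vertex of $X^C$ is either strongly complete or strongly anticomplete to $X$, and if all were strongly anticomplete then there would be no edges between $X$ and $X^C$, contradicting connectedness.

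Next I would partition $X^C$ into the set $S$ of vertices strongly complete to $X$ and the set $T$ of vertices strongly anticomplete to $X$, with $S$ already known to be nonempty. To show $T = \emptyset$, suppose otherwise; since $T$ is strongly anticomplete to $X$, connectedness forces some $z \in T$ to be adjacent to some $w \in S$ (otherwise $T$ would be a union of components disjoint from $X$). But then $u, v, z \in N(w)$ and these three vertices are pairwise antiadjacent ($u, v$ by the choice of the pair, and $z$ to each of $u, v$ because $z \in T$), giving a claw at $w$. This contradiction yields $X^C = S$, so $X$ is strongly complete to $X^C$. The heart of the argument is exactly this step, where homogeneity, connectedness, and claw-freeness have to interact precisely to kill the set $T$.

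The bound $\alpha(G) = 2$ then follows quickly: the pair $u, v$ gives $\alpha(G) \geq 2$, while any stable set $\{a, b, c\}$ cannot straddle $X$ and $X^C$ (by the previous step any pair across would be strongly adjacent), so it lies entirely in $X$ or entirely in $X^C$; in the former case a claw appears at any $w \in X^C$, in the latter at any $x \in X$. Finally, if $G$ is quasi-line, fix any $w \in X^C$; by the first conclusion $X$ lies in the strong neighbourhood of $w$, and intersecting a two-strong-clique cover of $N(w)$ with $X$ produces a two-strong-clique cover of $X$, so $G|X$ is cobipartite.
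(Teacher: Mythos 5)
Your proof is correct and follows essentially the same route as the paper's: connectedness plus a claw at a vertex of $S$ with a neighbour in $T$ forces $X$ to be strongly complete to $X^C$, and the remaining claims follow from claw-freeness and quasi-lineness of vertex neighbourhoods. Your derivation of $\alpha(G)=2$ (any stable set of size $3$ lies wholly in $X$ or in $X^C$ and yields a claw centred on the other side) is in fact slightly more direct than the paper's, which detours through a universal-vertex case and a symmetric application of the homogeneous-set argument to $X^C$.
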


This lemma implies that no connected non-degenerate trigraph contains a homogeneous set that is not a strong clique.

\begin{proof*}
We claim that $X$ is complete to $X^C$.  To see this, suppose the contrary.  Then since $G$ is connected, there are adjacent vertices $u$ and $v$ such that $u$ is strongly complete to $X$ and $v$ is strongly anticomplete to $X$.  Since $X$ is not a strong clique, $G$ contains a claw at $v$, a contradiction.  Thus $X$ is strongly complete to $X^C$.  Since $X^C$ is nonempty by the definition of a homogeneous set, $\alpha(G|X)=2$, and if $G$ is quasi-line then $G|X$ is cobipartite.

If there is a universal vertex, then clearly $\alpha(G)\leq 2$ and if $G$ is quasi-line, then $G|X$ (and indeed $G$) is cobipartite, so the lemma holds.  So assume there is no universal vertex in $G$.  It follows that $X^C$ must also be a homogeneous set which is not a strong clique, otherwise $G$ would contain a universal vertex in $X^C$.  Thus applying the symmetric argument tells us that $\alpha(G|X^C)=2$ and if $G$ is quasi-line, then $G|X^C$ is cobipartite.  Since $X$ is complete to $X^C$, the result follows.
\end{proof*}

Let $(A_1,B_1)$ and $(A_2,B_2)$ be HPOSCs in a trigraph.  We say that they have {\em skew intersection} if a part of one pair intersects both parts of the other pair, for example if $A_1\cap A_2$ and $A_1\cap B_2$ are both nonempty.

\begin{proposition}\label{prop:skew}
Suppose $G$ is a trigraph containing two deletion-minimal homogeneous pairs of strong cliques $(A_1,B_1)$ and $(A_2,B_2)$ with skew intersection.  Then $A_1\cup B_1\cup A_2\cup B_2$ can be covered by two strong cliques and is either a homogeneous set or the entire vertex set.
\end{proposition}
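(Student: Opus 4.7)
The plan is to leverage the deletion-minimality of $(A_2,B_2)$ to upgrade the given skew intersection into a four-way intersection pattern, from which both the homogeneity and the strong clique cover of $X:=A_1\cup B_1\cup A_2\cup B_2$ follow almost mechanically. Assume without loss of generality that $A_1\cap A_2$ and $A_1\cap B_2$ are both nonempty, and fix witnesses $x\in A_1\cap A_2$ and $y\in A_1\cap B_2$; since $A_1$ is a strong clique, $x$ and $y$ are strongly adjacent.

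The first and main step is to show that all four of $A_1\cap A_2$, $A_1\cap B_2$, $B_1\cap A_2$, $B_1\cap B_2$ are nonempty. Since $(A_2,B_2)$ is deletion-minimal, $y\in B_2$ is not strongly complete to $A_2$, so some $a\in A_2$ is not strongly adjacent to $y$. I would rule out the two unwanted locations for $a$: it cannot lie in $A_1$ (strong clique, forcing $y,a$ strongly adjacent); and it cannot lie outside $A_1\cup B_1$, since then the HPOSC condition for $(A_1,B_1)$ would make $a$ strongly complete or strongly anticomplete to $A_1$, and $a,x\in A_2$ being strongly adjacent forces the former, contradicting $y,a$ not strongly adjacent. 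Hence $a\in B_1\cap A_2$. The symmetric argument applied to $x\in A_2$, using that $x$ is not strongly complete to $B_2$ and that $y\in A_1\cap B_2$, produces a vertex in $B_1\cap B_2$.

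With all four quadrants nonempty, homogeneity of $X$ (when $X\neq V(G)$) is a short check. Given $z\in V(G)\setminus X$, $z$ lies outside both $A_1\cup B_1$ and $A_2\cup B_2$, so the two HPOSC conditions say $z$ is strongly complete or strongly anticomplete to each of $A_1,B_1,A_2,B_2$ separately. Nonemptiness of $A_1\cap A_2$ forces $z$'s type on $A_1$ to coincide with its type on $A_2$; chaining through the other three nonempty quadrants shows that all four types coincide, so $z$ has a single uniform type with respect to $X$.

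For the cover by two strong cliques, I would take
\[
C_1:=A_1\cup\bigl(A_2\setminus(A_1\cup B_1)\bigr), \qquad C_2:=B_1\cup\bigl(B_2\setminus(A_1\cup B_1)\bigr).
\]
Any $v\in A_2\setminus(A_1\cup B_1)$ is outside $A_1\cup B_1$, so the HPOSC condition makes it strongly complete or strongly anticomplete to $A_1$; since $v,x\in A_2$ are strongly adjacent, the anticomplete case is ruled out, and $C_1$ is a strong clique (and $C_2$ is symmetric). The equality $X=C_1\cup C_2$ is immediate from the partition $A_2=(A_2\cap A_1)\cup(A_2\cap B_1)\cup(A_2\setminus(A_1\cup B_1))$ and its analogue for $B_2$. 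The main obstacle is really just the first step; once the four quadrants are populated, the rest of the proof is a direct bookkeeping exercise and no claw-freeness is needed.
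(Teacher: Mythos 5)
Your proposal is correct and follows essentially the same route as the paper's proof: use the deletion-minimality of one pair together with the homogeneity condition of the other to show that all four intersections among $\{A_1,B_1\}$ and $\{A_2,B_2\}$ are nonempty, then read off the two-strong-clique cover and the homogeneity of the union by the same chaining argument. The only cosmetic difference is which pair's deletion-minimality you invoke first, which is immaterial by the symmetry of the skew-intersection hypothesis.
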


\begin{proof*}
Assume without loss of generality that $A_2$ intersects $A_1$ and $B_1$.  Since $(A_1,B_1)$ is deletion-minimal and $A_2$ is a strong clique, there is a vertex $b_1 \in B_1\setminus A_2$ that is antiadjacent to a vertex in $A_1\cap A_2$.  If $b_1 \notin B_2$ then the fact that $b_1$ is neither strongly complete nor strongly anticomplete to $A_2$ contradicts the fact that $(A_2,B_2)$ is a HPOSC, so $b_1\in B_2$.  Likewise there is a vertex $a_1 \in A_1\setminus A_2$ that is antiadjacent to a vertex in $B_1\cap A_2$, and $a_1 \in B_2$.  So $B_2$ must intersect both $A_1$ and $B_1$.

Suppose there is a vertex $v\in (A_2\cup B_2) \setminus (A_1\cup B_1)$.  Then $v$ is strongly complete to $A_1\cup B_1$.  It follows that $A_1\cup B_1\cup A_2\cup B_2$ can be covered by two strong cliques:  $A_1\cup (A_2 \setminus (A_1\cup B_1))$ and $B_1\cup (B_2 \setminus (A_1\cup B_1))$.

It remains for us to prove that $A_1\cup B_1\cup A_2\cup B_2$ is a homogeneous set or the entire vertex set, so assume there is a vertex $v \notin A_1\cup B_1\cup A_2\cup B_2$ and assume it is strongly complete to $A_1$.  It must therefore also be strongly complete to $A_2 \cup B_2$, and therefore also strongly complete to $B_1$.  If $v$ is not strongly complete to $A_1$ it is strongly anticomplete to $A_1$, and it follows that $v$ is strongly anticomplete to $A_2$, $B_2$, and $B_1$.  Thus $A_1\cup B_1\cup A_2\cup B_2$ is a homogeneous set or the entire vertex set.
\end{proof*}

\begin{figure}
\begin{center}
\includegraphics[scale=0.7]{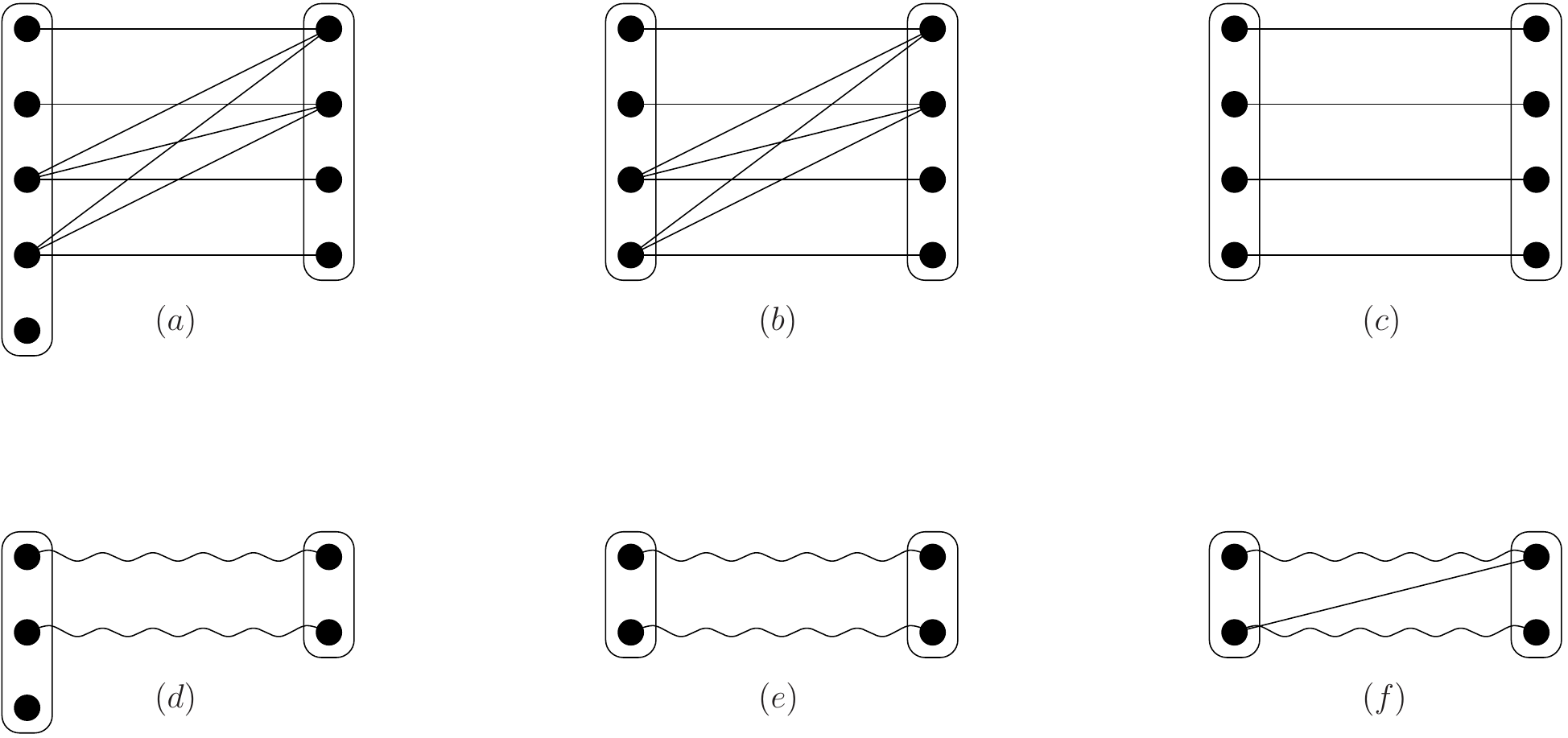}
\end{center}
\caption{\footnotesize{Some examples of homogeneous pairs of strong cliques.  Wiggly edges denote semiadjacency. Only $(c)$ and $(e)$ are square-connected, and only $(b)$, $(c)$, and $(e)$ are deletion-minimal.
}}
\label{fig:homogeneouspairs}
\end{figure}

\begin{corollary}\label{cor:skew}
Suppose $G$ is a connected claw-free trigraph containing two deletion-minimal homogeneous pairs of strong cliques $(A_1,B_1)$ and $(A_2,B_2)$ with skew intersection.  Then $G$ is degenerate. 
\end{corollary}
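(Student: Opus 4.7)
The plan is a direct chaining of Proposition \ref{prop:skew} and Lemma \ref{lem:homogeneousset}. Set $X := A_1 \cup B_1 \cup A_2 \cup B_2$. Proposition \ref{prop:skew} tells us that $X$ can be covered by two strong cliques and is either a homogeneous set or equal to $V(G)$. I would first record the quick observation that $X$ itself is not a strong clique: since $(A_1,B_1)$ is deletion-minimal, $G|(A_1\cup B_1)$ contains a square, so $X$ already contains antiadjacent pairs.

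In the case $X = V(G)$, the two-clique cover exhibits $G$ as cobipartite, which forces $\alpha(G) \leq 2$. This fails both clauses of the definition of non-degenerate (``quasi-line and non-cobipartite'' and ``claw-free with $\alpha \geq 3$''), so $G$ is degenerate.

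In the case where $X$ is a proper homogeneous set, I would apply Lemma \ref{lem:homogeneousset} to conclude $\alpha(G) = 2$, which already rules out the second clause of non-degeneracy. In the quasi-line sub-case, the proof of the lemma (not merely its statement) actually gives that $G$ itself is cobipartite: if there is a universal vertex this is immediate, and otherwise $X^C$ is also a homogeneous set that is not a strong clique, so the symmetric application makes $G|X^C$ cobipartite, and combining the two-clique covers of $X$ and $X^C$ via their mutual strong completeness produces a two-clique cover of $G$. Hence $G$ is degenerate in this case as well.

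The real structural work is already handled by Proposition \ref{prop:skew} and Lemma \ref{lem:homogeneousset}; the only subtlety I anticipate is the lift from ``$G|X$ cobipartite'' to ``$G$ cobipartite'' in the quasi-line sub-case of Case 2, for which one must read the proof of Lemma \ref{lem:homogeneousset} rather than only its statement. Beyond that, the corollary is essentially a one-line consequence of combining the two earlier results.
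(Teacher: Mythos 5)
Your proposal is correct and follows exactly the paper's route: the paper's entire proof is the one-line citation ``This follows immediately from Lemma \ref{lem:homogeneousset} and Proposition \ref{prop:skew},'' and you have simply filled in the details it leaves implicit. Your observation that the quasi-line sub-case needs the lift from ``$G|X$ cobipartite'' to ``$G$ cobipartite'' (via the argument inside the proof of Lemma \ref{lem:homogeneousset}, not just its statement) is a legitimate and correctly handled subtlety that the paper glosses over.
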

\begin{proof*}
This follows immediately from Lemma \ref{lem:homogeneousset} and Proposition \ref{prop:skew}.
\end{proof*}


\begin{proposition}
Suppose $G$ is a trigraph containing two deletion-minimal homogeneous pairs of strong cliques $(A_1,B_1)$ and $(A_2,B_2)$ without skew intersection, such that $A_1$ intersects $A_2$.  Then $B_1$ intersects $B_2$.
\end{proposition}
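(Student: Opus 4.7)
The approach is proof by contradiction: assume $B_1 \cap B_2 = \emptyset$. The first move is to unpack the hypotheses. Since $A_1 \cap A_2 \neq \emptyset$ and the intersection is not skew, we also have $A_1 \cap B_2 = \emptyset$ and $A_2 \cap B_1 = \emptyset$. Combined with $B_1 \cap B_2 = \emptyset$, this means every vertex of $B_2$ lies outside $A_1 \cup B_1$, and symmetrically every vertex of $B_1$ lies outside $A_2 \cup B_2$. By the HPOSC property of $(A_1, B_1)$, each $b \in B_2$ is therefore strongly complete or strongly anticomplete to $A_1$; write $B_2 = B_2^+ \cup B_2^-$ for this split.

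The first real step is to verify that both $B_2^+$ and $B_2^-$ are nonempty. I would fix $v \in A_1 \cap A_2$ and invoke deletion-minimality of $(A_2, B_2)$, which tells us that $v$ is neither strongly complete nor strongly anticomplete to $B_2$. The former yields a witness $b \in B_2$ not strongly adjacent to $v$; since $v \in A_1$, this witness cannot lie in $B_2^+$, so $B_2^- \neq \emptyset$. The latter symmetrically produces a witness in $B_2^+$.

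The next step is to deduce $A_1 \subseteq A_2$. Pick any $u \in A_1$: by construction $u$ is strongly adjacent to every vertex of $B_2^+$ and strongly antiadjacent to every vertex of $B_2^-$, so $u$ is neither strongly complete nor strongly anticomplete to $B_2$. The HPOSC property of $(A_2, B_2)$ then forces $u \in A_2 \cup B_2$, and since $A_1 \cap B_2 = \emptyset$ we conclude $u \in A_2$. The whole argument is symmetric in the two pairs, so interchanging their roles gives $A_2 \subseteq A_1$ as well, hence $A_1 = A_2$.

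For the final contradiction, take any $b \in B_1$. Since $B_1$ is disjoint from $A_2 = A_1$ and from $B_2$, the vertex $b$ lies outside $A_2 \cup B_2$, so by the HPOSC property of $(A_2, B_2)$ it is strongly complete or strongly anticomplete to $A_2 = A_1$; this directly contradicts the deletion-minimality of $(A_1, B_1)$. The step I would be most careful about is the nonemptiness of $B_2^\pm$: one has to keep the trigraph distinction between ``not strongly complete'' and ``strongly anticomplete'' straight, so that the witnesses provided by deletion-minimality really do land in the intended halves of the partition. Everything after that is a clean HPOSC-chase.
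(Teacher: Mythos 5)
Your proof is correct and uses the same toolkit as the paper's: deletion-minimality supplies witnesses that are adjacent/antiadjacent to a vertex of $A_1\cap A_2$, the no-skew and $B_1\cap B_2=\emptyset$ assumptions place those witnesses outside the other pair, and the homogeneity condition then forces a contradiction. The only difference is organizational (you first prove $A_1=A_2$ and then contradict deletion-minimality, whereas the paper disposes of $A_1=A_2$ immediately and contradicts homogeneity using a vertex of $A_2\setminus A_1$), so this is essentially the paper's argument.
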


\begin{proof*}
Suppose to the contrary that $B_1\cap B_2=\emptyset$.  First assume that $A_1 = A_2$; this implies that no vertex in $B_1$ can be strongly complete or strongly anticomplete to $A_1=A_2$, a contradiction.  So assume without loss of generality that $A_2\setminus A_1 \neq \emptyset$.

Let $v$ be a vertex in $A_1\cap A_2$.  Since $v$ is in $A_1$ and $(A_1,B_1)$ is not deletion-minimal, there exist vertices $v_1$ and $v_2$ in $B_1$ such that $v$ is adjacent to $v_1$ and antiadjacent to $v_2$.  It follows that $v_1$ is strongly complete to $A_2$ and that $v_2$ is strongly anticomplete to $A_2$.  Since $A_2$ is not a subset of $A_1$, there is a vertex $u\in A_2\setminus A_1$, and it is strongly adjacent to $v_1$ and strongly antiadjacent to $v_2$.  This contradicts the fact that $(A_1,B_1)$ is a homogeneous pair of strong cliques.  
\end{proof*}

Now we know that deletion-minimal HPOSCs behave well in terms of intersection.  To get a better handle on inclusion-minimal deletion-minimal HPOSCs $(A,B)$ (meaning that if $A'\subseteq A$ and $B'\subseteq B$ and $(A',B')$ is a deletion-minimal HPOSC, then $(A',B')=(A,B)$), we introduce a new type of HPOSC.  We say that a homogeneous pair of strong cliques $(A,B)$ in a trigraph $G$ is {\em square-connected} if for any partition of $A$ (resp.\ $B$) into nonempty sets $A'$ and $A''$ (resp.\ $B'$ and $B''$), there is a square in $A\cup B$ intersecting both $A'$ and $A''$ (resp.\ $B'$ and $B''$).  As we will soon see, the smallest HPOSC containing a given square is always square-connected.  The proof of the following is an easy exercise:

\begin{proposition}
Every square-connected homogeneous pair of strong cliques $(A,B)$ is deletion-minimal.
\end{proposition}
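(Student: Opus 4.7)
My plan has two ingredients. First I would unpack what a square inside $A \cup B$ must look like: since $A$ and $B$ are strong cliques, no antiadjacent pair can lie entirely in $A$ or entirely in $B$, so each of the two antiadjacent ``diagonals'' of a square in $A \cup B$ has one endpoint in $A$ and one in $B$. A short case check then forces every square in $A \cup B$ to have exactly two vertices in $A$ and two in $B$.

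With this structural observation in hand, the first deletion-minimal condition, namely that $G|(A \cup B)$ contains a square, is immediate: by the HPOSC axiom $A$ and $B$ are not both singletons, so at least one of them admits a partition into two nonempty parts, and square-connectedness then supplies a square in $A \cup B$.

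For the second condition I would argue by contradiction. Suppose some $v \in A$ is strongly complete to $B$ or strongly anticomplete to $B$. Then $v$ cannot appear in any square of $A \cup B$: by the two-two split above, being in such a square would force $v$ to have both an antiadjacent and an adjacent vertex in $B$, contradicting the supposed strong completeness or anticompleteness. Now if $|A| \ge 2$, I partition $A$ as $\{v\}$ and $A \setminus \{v\}$ and invoke square-connectedness to obtain a square meeting $\{v\}$, contradicting that $v$ is in no square. If $|A| = 1$, then $|B| \ge 2$ by the HPOSC axiom, so I partition $B$ nontrivially instead; square-connectedness hands me a square in $A \cup B$, but such a square uses two vertices of $A$, contradicting $|A| = 1$. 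The same reasoning applied on the $B$ side handles vertices of $B$, and $(A,B)$ is deletion-minimal.

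The whole argument is routine, in keeping with the author's remark that it is an easy exercise. The only place that calls for slight care is the singleton case: when $|A| = 1$ we cannot partition $A$, so we must flip and partition $B$ and then contradict the two-two split observation. Pinning down that swap is the single ``obstacle,'' and it is dispatched immediately.
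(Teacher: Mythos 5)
Your proof is correct, and since the paper leaves this proposition as an easy exercise with no written proof, your argument is precisely the intended routine one: the two–two split of any square across $A$ and $B$ (forced by the strong cliques absorbing no antiadjacent diagonal), plus square-connectedness applied to the partition $\{v\}, A\setminus\{v\}$ (or to $B$ in the singleton case). No gaps.
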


Thus we can apply the results we have proven for deletion-minimal HPOSCs to our analysis of square-connected HPOSCs.  Further, the union of two intersecting square-connected HPOSCs is another square-connected HPOSC:

\begin{proposition}\label{prop:scunion}
Suppose $G$ is a trigraph containing two square-connected homogeneous pairs of strong cliques $(A_1,B_1)$ and $(A_2,B_2)$ without skew intersection and such that $A_1\cap A_2\neq \emptyset$ and $B_1\cap B_2\neq \emptyset$.  Then $(A_1\cup A_2, B_1\cup B_2)$ is a square-connected homogeneous pair of strong cliques.
\end{proposition}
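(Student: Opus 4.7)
The plan is to verify in sequence the axioms for an HPOSC and then the square-connectedness property, all while leveraging the following initial observation: because $(A_1,B_1)$ and $(A_2,B_2)$ have no skew intersection, and since $A_1\cap A_2$ and $B_1\cap B_2$ are nonempty, we must have $A_1\cap B_2=\emptyset$ and $B_1\cap A_2=\emptyset$ (otherwise some $A_i$ or $B_i$ would intersect both parts of the other pair). In particular $(A_1\cup A_2)\cap(B_1\cup B_2)=\emptyset$, and the ``witness'' vertices in $A_1\cap A_2$ and $B_1\cap B_2$ will do most of the work.

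First I would show $A_1\cup A_2$ is a strong clique. Vertices lying together in $A_1$ or together in $A_2$ are strongly adjacent by hypothesis, so it suffices to handle $u\in A_1\setminus A_2$ and $v\in A_2\setminus A_1$. Since $B_1\cap A_2=\emptyset$, we have $v\notin A_1\cup B_1$, so the HPOSC property of $(A_1,B_1)$ says $v$ is strongly complete or strongly anticomplete to $A_1$; fixing $w\in A_1\cap A_2$, the pair $vw$ lies inside $A_2$ and is therefore strongly adjacent, forcing $v$ strongly complete to $A_1$, hence strongly adjacent to $u$. The symmetric argument shows $B_1\cup B_2$ is a strong clique.

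Next I would handle the external uniformity. For $v\notin A_1\cup A_2\cup B_1\cup B_2$, the two HPOSC hypotheses already give that $v$ is strongly complete or strongly anticomplete to each of $A_1,A_2,B_1,B_2$ individually. Picking a witness $w\in A_1\cap A_2$, the decisions for $A_1$ and $A_2$ must agree (both determined by whether $v$ is strongly adjacent to $w$), so $v$ is uniform to $A_1\cup A_2$; the analogous witness in $B_1\cap B_2$ handles $B_1\cup B_2$. The HPOSC condition that the two strong cliques are not both singletons is inherited directly from $(A_1,B_1)$.

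Finally, for square-connectedness, consider any partition $A_1\cup A_2=A'\sqcup A''$ into nonempty parts. The key observation is that a vertex in $A_1\cap A_2$ lies in exactly one of $A',A''$, which rules out the configuration $A_1\subseteq A'$, $A_2\subseteq A''$ (or its reverse); hence either $A_1\cap A'$ and $A_1\cap A''$ are both nonempty, or $A_2\cap A'$ and $A_2\cap A''$ are both nonempty. Applying square-connectedness of the relevant original pair yields a square in $A_i\cup B_i\subseteq (A_1\cup A_2)\cup(B_1\cup B_2)$ meeting both sides of the partition. Partitions of $B_1\cup B_2$ are treated identically. The only step that requires a moment of care is the strong-clique verification, where one must remember to use the no-skew hypothesis to exclude $v\in B_1$; everything else follows from the two witness vertices in $A_1\cap A_2$ and $B_1\cap B_2$.
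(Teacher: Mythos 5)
Your proof is correct and takes essentially the same route as the paper's: the square-connectedness argument (a witness vertex in $A_1\cap A_2$ forces one of $A_1,A_2$ to meet both parts of any partition, and then square-connectedness of that pair applies) is identical to the one given there. Your careful verifications of the strong-clique and external-uniformity conditions simply fill in details the paper's proof dismisses as ``clearly'' and ``it is easy to see,'' and they are accurate.
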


\begin{proof*}
Clearly both $A=A_1\cup A_2$ and $B=B_1\cup B_2$ must be strong cliques.  Take some $v\notin A_1\cup B_1\cup A_2\cup B_2$; it is easy to see that $v$ is either strongly complete or strongly anticomplete to $A$, and either strongly complete or strongly anticomplete to $B$.  Thus $(A,B)$ is a homogeneous pair of strong cliques.

Suppose $(A,B)$ is not square-connected; assume without loss of generality that we can partition $A$ into nonempty $A'$ and $A''$ such that $A\cup B$ contains no square intersecting both $A'$ and $A''$.  Since $A_1$ and $A_2$ have nonempty intersection and $A'$ and $A''$ partition $A$, at least one of $A_1$ and $A_2$ intersects both $A'$ and $A''$; assume $A_1$ does.  This means that $A_1\cup B_1$ does not contain a square intersecting both $A'\cap A_1$ and $A''\cap A_1$, contradicting the fact that $(A_1,B_1)$ is square-connected.
\end{proof*}

We use the following straightforward algorithm of King and Reed \cite{kingr08} (generalized in the obvious way to trigraphs) to construct a square-connected homogeneous pair of cliques $(A,B)$ in a trigraph $G$, starting with strongly adjacent vertices $a_0$ and $a_1$ contained in a square in $G$.  Note that the other two vertices of this square must then be in $B$.

\begin{enumerate*}
\item Set $A=\{a_0,a_1\}$, set $B$ to be two vertices $b_0$, $b_1$ such that $a_0a_1b_1b_0$ is a square.  If no such vertices exist then quit.
\item If $A$ and $B$ are not strong cliques, there is no homogeneous pair of strong cliques with $\{a_0,a_1\}\subseteq A$, so quit.
\item If there is a vertex $v\notin A\cup B$ that is neither strongly complete nor strongly anticomplete to $A$, put $v$ in $B$ and go to Step 2.
\item If there is a vertex $v\notin A\cup B$ that is neither strongly complete nor strongly anticomplete to $B$, put $v$ in $A$ and go to Step 2.
\end{enumerate*}

We call this the {\em SCHPOSC algorithm}, for {\em square-connected homogeneous pair of strong cliques}.  It is easy to see that the output is unique, i.e.\ it does not depend on any ordering of the vertices.  We claim that it runs in $O(m)$ time for a given $a_0$ and $a_1$, and always returns the desired square-connected homogeneous pair of strong cliques if one exists.

First we note that if the algorithm adds a vertex $v$ to $(A,B)$ at some point, then $v$ must be in every homogeneous pair of strong cliques $(A,B)$ such that $a_0,a_1\in A$.  Thus it is clear that the algorithm finds the unique smallest homogeneous pair of strong cliques with $a_0,a_1$ in $A$.  It suffices to prove that the output is square-connected and computed efficiently.

\begin{proposition}
Given $a_0$ and $a_1$ contained in some square in a trigraph $G$, the SCHPOSC algorithm always produces a square-connected homogeneous pair of strong cliques $(A,B)$ with $A$ containing $a_0$ and $a_1$, if one exists.
\end{proposition}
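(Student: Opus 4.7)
The plan is to prove by induction on the iteration count $i$ of the algorithm that the current pair $(A_i,B_i)$ is square-connected; combined with the paragraph preceding the proposition, which already observes that the algorithm returns the smallest HPOSC containing $\{a_0,a_1\}$ in $A$ whenever one exists, this yields that the output is a square-connected HPOSC whenever any such object exists.

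The base case is immediate: $(A_0,B_0)=(\{a_0,a_1\},\{b_0,b_1\})$ comes from a single square $a_0a_1b_1b_0$, and the only partition on either side is into two singletons, witnessed by this very square. For the inductive step, assume $(A_{i-1},B_{i-1})$ is square-connected and the algorithm adds a vertex $v$; by symmetry of Steps 3 and 4 we may assume $v$ is placed in $B$, so $A_i=A_{i-1}$ and $B_i=B_{i-1}\cup\{v\}$. Partitions of $A_i=A_{i-1}$ transfer unchanged from the inductive hypothesis, and for a partition $B'\cup B''=B_i$ in which both sides meet $B_{i-1}$ the inductive hypothesis applied to the induced partition of $B_{i-1}$ supplies the desired square in $A_{i-1}\cup B_{i-1}\subseteq A_i\cup B_i$.

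The only substantive case is $B'=\{v\}$, $B''=B_{i-1}$. Because $v$ was added at Step 3, it is neither strongly complete nor strongly anticomplete to $A_{i-1}$, so we may partition $A_{i-1}=\mathcal{A}^0\cup\mathcal{A}^1$ into nonempty parts with every vertex of $\mathcal{A}^0$ adjacent to $v$ and every vertex of $\mathcal{A}^1$ antiadjacent to $v$ (the at-most-one semiadjacent vertex may be placed on either side to keep both parts nonempty). By the inductive hypothesis there is a square in $A_{i-1}\cup B_{i-1}$ whose two $A$-vertices $a^0,a^1$ lie in $\mathcal{A}^0,\mathcal{A}^1$ respectively and whose two $B$-vertices are some $b_\star,b_{\star\star}\in B_{i-1}$. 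The $2K_2$-type antiadjacency pattern of this square forces, after possibly swapping $b_\star$ and $b_{\star\star}$, that $a^0$ is antiadjacent to $b_{\star\star}$ and $a^1$ is adjacent to $b_{\star\star}$. One then verifies directly that $v,a^0,a^1,b_{\star\star}$ is a square in $A_i\cup B_i$: its $4$-cycle is $v\hbox{--}a^0\hbox{--}a^1\hbox{--}b_{\star\star}\hbox{--}v$ (edges using $v\in N(a^0)$, $a^0,a^1$ in the strong clique $A_i$, $a^1$ adjacent to $b_{\star\star}$, and $v,b_{\star\star}$ in the strong clique $B_i$), while its diagonals $\{v,a^1\}$ and $\{a^0,b_{\star\star}\}$ are antiadjacent. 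This square contains $v$ and $b_{\star\star}\in B_{i-1}$, as needed.

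The single obstacle is this final case: producing a square that involves the newly added vertex $v$. The key idea is to let $v$'s own adjacency to $A_{i-1}$ dictate which partition of $A_{i-1}$ to feed to the inductive hypothesis, because the square it returns then has exactly the shape needed to substitute $v$ for one of its $A$-vertices. The symmetric case in which $v$ is placed in $A$ at Step 4 is handled identically by swapping the roles of $A$ and $B$.
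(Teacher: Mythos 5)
Your proof is correct and follows essentially the same route as the paper's: induct on the algorithm's steps, observe that the only partition not handled by the inductive hypothesis is the one isolating the newly added vertex $v$, partition the opposite side of the pair according to $v$'s adjacency, and substitute $v$ into the square that square-connectedness of the previous pair guarantees across that partition. Your write-up is a bit more explicit than the paper's (you verify the new square edge by edge and spell out the easy partition cases), but the key idea is identical.
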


\begin{proof*}
Let $(A_k,B_k)$ denote the partial pair constructed by the algorithm such that $|A_k\cup B_k|=k$.  It suffices to prove that $(A_k,B_k)$ is square-connected, which we will do by induction.  This is true by assumption for $k=4$, so let $k>4$ and assume without loss of generality that there is a vertex $v\in A_k\setminus A_{k-1}$.

Suppose $(A_k,B_k)$ is not square-connected.  It follows by the induction hypothesis (i.e.\ that $(A_{k-1},B_{k-1})$ is square-connected) that there is no square in $G|(A_k\cup B_k)$ that contains $v$.  Let $X$ and $Y$ 
partition $B_k$ such that $X$ is complete to $v$, $Y$ is anticomplete to $v$, and both $X$ and $Y$ are nonempty.  That is, $X$ will contain the strong neighbours of $v$ in $B_k$, $Y$ will contain the strong antineighbours of $v$ in $B_k$, and if there is a vertex $u\in B_k$ semiadjacent to $v$, we can put it in either $X$ or $Y$ to ensure that both are nonempty, which is possible because $|B_k|\geq 2$.

Since $(A_{k-1},B_{k-1})=(A_{k-1},B_{k})$ is square-connected, there is a square $S$ in $(A_{k-1},B_{k})$ intersecting both $X$ and $Y$.  This square contains a vertex $v'\in A_{k-1}$ that is adjacent to the vertex in $S\cap X$ and antiadjacent to the vertex in $S\cap Y$.  Replacing this vertex with $v$ gives us a square in $(A_k,B_k)$ containing $v$, thus it follows that $(A_k,B_k)$ is indeed square-connected.  The proposition follows.
\end{proof*}

\begin{proposition}
Given a strongly adjacent pair of vertices $a_0, a_1$ in a trigraph $G$, the SCHPOSC algorithm runs in $O(m)$ time.
\end{proposition}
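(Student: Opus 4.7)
The plan is to exhibit an implementation of the SCHPOSC algorithm whose total work is dominated by a single scan of each adjacency list, then use an amortized argument to conclude an $O(m)$ bound (using $n=O(m)$ from connectedness).

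I would maintain the following data structures throughout the execution. For each vertex $v \notin A\cup B$: a counter $s_A(v)$ of the strong neighbours of $v$ in $A$; a counter $s_B(v)$ of the strong neighbours of $v$ in $B$; and a bit indicating whether $v$ is semiadjacent to a vertex currently in $A\cup B$ (since semiedges form a matching, each $v$ has at most one such semiedge, and once it is counted we know which side it falls in). From these values, whether $v$ is neither strongly complete nor strongly anticomplete to $A$ can be decided in $O(1)$: it suffices to compare $s_A(v)$ to $|A|$ and to $0$, and then check the semiedge bit. In addition, keep a worklist (a simple stack or queue) of all vertices that are currently ``mixed'' with respect to $A$ or with respect to $B$, so that each iteration of Steps 3--4 can locate a candidate in $O(1)$ amortized time or detect termination.

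Initialization, including computing $s_A$ and $s_B$ for the four starting vertices, costs $O(n+m)$: a single scan of the adjacency lists of $a_0,a_1,b_0,b_1$ suffices. The only update step is moving a vertex $w$ from outside into $A$ or $B$. When this happens, walk down the adjacency list of $w$ once, and for each neighbour $v$ of $w$ still outside $A\cup B$ increment $s_A(v)$ or $s_B(v)$ (or set the semiedge bit) and, if $v$ has just become mixed, push $v$ onto the worklist; at the same time, use the value of $s_A(w)$ (or $s_B(w)$) recorded just before the move to check in $O(1)$ whether $w$ is strongly adjacent to every vertex already on its side, catching the Step 2 failure condition without rescanning $A$ or $B$. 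All of this work is proportional to $\deg(w)$.

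Summing over the execution, each vertex is moved into $A\cup B$ at most once, so the total update cost is $\sum_{w\in A\cup B} O(\deg(w)) = O(m)$, and the worklist operations add only $O(n)=O(m)$ overhead. Combined with the $O(n+m)=O(m)$ initialization, this gives the claimed $O(m)$ bound. The main obstacle is purely bookkeeping: correctly handling the semiedge (since a semiadjacency to $A$ disqualifies $v$ from being either strongly complete or strongly anticomplete to $A$, it must not be conflated with a strong edge in the counter $s_A$), and making sure that the ``mixed'' test is $O(1)$ rather than a rescan of $A\cup B$. Once the counters and the semiedge bit are kept separate, both of these reduce to constant-time checks and the amortized argument goes through cleanly.
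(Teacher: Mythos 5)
There is a genuine gap in the update step. When a vertex $w$ is moved into $A$, a vertex $v$ outside $A\cup B$ can become mixed with respect to $A$ in two ways: either $v$ was strongly anticomplete to $A$ and is adjacent to $w$, or $v$ was strongly complete to $A$ and is \emph{strongly antiadjacent} to $w$. Your update walks only the adjacency list of $w$, so it never visits vertices of the second kind, and they are never pushed onto the worklist. (The lazy test ``compare $s_A(v)$ with $|A|$'' would still answer correctly if queried, but nothing ever triggers the query for such a $v$.) Consequently Steps 3--4 can report termination while a mixed vertex remains outside $A\cup B$, and the output need not be a homogeneous pair of strong cliques at all; the $O(\deg(w))$ charge per move is achieved only because the implementation skips necessary work.

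The paper's proof confronts exactly this point, calling it the ``slightly more subtle'' part: it maintains the explicit partition of the outside vertices into $\tilde A$, $\tilde B$, $N_A$, $N_B$, $N_{AB}$, $N_\emptyset$, and when $w$ enters $A$ it scans all of $N_{AB}\cup N_A$ (the vertices currently strongly complete to $A$) looking for those antiadjacent to $w$. The cost of that scan is not $O(\deg(w))$ per se; it is amortized by a two-sided charging argument: every inspected vertex that does \emph{not} move is adjacent to $w$ and is charged to $w$'s adjacency list, while every vertex that does move (from $N_{AB}$ into $\tilde B$, or triggers a quit from $N_A$) does so at most once over the whole run, contributing $O(n)$ globally. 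Your counter-and-semiedge-bit bookkeeping is fine as far as it goes, and your amortization of the first kind of status change is essentially the paper's; but to close the argument you need some mechanism---such as the paper's scan of the ``strongly complete to $A$'' sets with the accompanying charging scheme---to detect vertices that become mixed through antiadjacency to the newly added vertex.
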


\begin{proof*}
We only run Step 1 once, and for every pair of strongly adjacent vertices $b_0, b_1$ we can check if $\{a_0,a_1,b_1,b_0\}$ contains a square.  This takes constant time per strongly adjacent pair, so we can perform Step 1 in $O(m)$ time.  Having performed Step 1, and assuming we go on to Step 2, we partition the vertices outside $A\cup B$ into six sets.

Let $\tilde A$ (resp.\ $\tilde B$) be the vertices neither strongly complete nor strongly anticomplete to $B$ (resp.\ $A$).  The vertices in $\tilde A$ and $\tilde B$ must go into $A$ and $B$ respectively, and if $\tilde A$ and $\tilde B$ are not disjoint then we must quit, since every vertex $v$ must be strongly complete to either $A$ or $B$, or strongly anticomplete to both.  Computing these sets takes $O(m)$ time, since for each vertex $v$ we simply count the number of vertices in $A$ and $B$ to which $v$ is strongly adjacent.

Let $N_A$, $N_B$, $N_{AB}$ and $N_\emptyset$ be the sets of vertices that are, respectively: strongly complete to $A$ and strongly anticomplete to $B$, strongly complete to $B$ and strongly anticomplete to $A$, strongly complete to $A\cup B$, and strongly anticomplete to $A\cup B$.  Again, finding these sets takes $O(m)$ time, and as long as $\tilde A$ and $\tilde B$ are disjoint, we have partitioned the vertices into eight sets, including $A$ and $B$.

We must now advance through Steps 2, 3, and 4, updating our eight sets every time we move a vertex into $A$ or $B$.  Suppose we move a vertex $v$ from $\tilde A$ into $A$.  Since we know $\tilde A$ and $\tilde B$ are disjoint, we already know that $v$ is strongly complete to $A$.  When we do this, $\tilde A$ loses $v$ and no vertex moves into $\tilde A$.

Every vertex in $N_B$ that is adjacent to $v$ must move into $\tilde B$, and if a vertex $u$ in $N_\emptyset$ is adjacent to $v$ then we must quit, since $u$ is strongly anticomplete to $B$ and neither strongly complete nor strongly anticomplete to $A$.  We can clearly perform this part of the update in time proportional to the number of vertices adjacent to $v$.  The next part is slightly more subtle:  Every vertex in $N_{AB}$ that is antiadjacent to $v$ must move to $\tilde B$, and if a vertex in $N_A$ is antiadjacent to $v$, then we must quit.  We can perform this step in $O(|N_{AB}|+|N_A|)$ time, but we move every vertex that is antiadjacent to $v$ from $N_{AB}$ into $\tilde B$, or quit due to a vertex in $N_A$ being antiadjacent to $v$, at most once per vertex throughout the entire process.  Thus the time we waste during this step, i.e.\ the time spent inspecting vertices that we do not move, is at most proportional to the number of vertices adjacent to $v$.  Since the case of moving a vertex from $\tilde B$ into $B$ is symmetric, it follows that the entire update process, summed over every movement of a vertex into $A\cup B$, takes $O(m)$ time.
\end{proof*}

We conclude the section with an unsurprising but useful technical lemma.

\begin{lemma}\label{lem:inside}
If $(A,B)$ is a deletion-minimal homogeneous pair of strong cliques in a trigraph $G$, then there is a square-connected homogeneous pair of strong cliques $(A',B')$ in $G$ such that $A'\subseteq A$ and $B' \subseteq B$.
\end{lemma}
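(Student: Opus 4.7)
The plan is to run the SCHPOSC algorithm from inside the square guaranteed by deletion-minimality, and show by a simple invariant that the algorithm never escapes $A \cup B$.

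Since $(A,B)$ is deletion-minimal, $G|(A \cup B)$ contains a square $v_1 v_2 v_3 v_4$. Because $A$ and $B$ are strong cliques, neither of the two antiadjacent diagonal pairs $v_1 v_3$ and $v_2 v_4$ can lie entirely inside $A$ or entirely inside $B$. Thus the square has exactly two vertices in $A$, say $a_0, a_1$, and two vertices in $B$, say $b_0, b_1$, with $a_0 a_1 b_1 b_0$ a square. I would apply the SCHPOSC algorithm with this choice of $a_0, a_1$ (and the initial $b_0, b_1$ from Step~1); by the previous proposition, since at least one HPOSC containing $a_0, a_1$ exists (namely $(A,B)$ itself), the algorithm will successfully produce a square-connected HPOSC $(A', B')$ with $a_0, a_1 \in A'$.

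The main content is then to verify the invariant that throughout the run of the algorithm, $A' \subseteq A$ and $B' \subseteq B$. This holds initially. Suppose the algorithm wants to add a vertex $v$ to $B'$ at Step~3, because $v$ is neither strongly complete nor strongly anticomplete to the current $A'$. Since $A' \subseteq A$ is a strong clique, $v$ cannot lie in $A$ (else $v$ would be strongly complete to $A \supseteq A'$). And $v$ cannot lie outside $A \cup B$, since such a vertex, by the definition of HPOSC $(A,B)$, is strongly complete or strongly anticomplete to $A$, hence to $A'$. Therefore $v \in B$, and the invariant is preserved. The symmetric argument handles Step~4. Since $A \subseteq V(G)$ and $B \subseteq V(G)$ are strong cliques, Step~2 never triggers a quit on $A'$ or $B'$, so the algorithm terminates normally and outputs a square-connected HPOSC $(A', B')$ with $A' \subseteq A$ and $B' \subseteq B$, as required.

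The only mildly delicate point is the initial choice of $a_0, a_1$: deletion-minimality only promises a square in $G|(A\cup B)$, not a specific orientation, so I want to make the elementary observation that squares cannot lie inside a single strong clique explicit. Everything else is a direct bookkeeping argument on the invariant and relies on previously established facts.
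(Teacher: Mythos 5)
Your proposal is correct and follows essentially the same route as the paper: run the SCHPOSC algorithm starting from the two $A$-vertices of a square in $G|(A\cup B)$ and observe that the homogeneity of $(A,B)$ (plus the fact that $A$ and $B$ are strong cliques) confines the algorithm's growth to $A$ and $B$ respectively. You are in fact slightly more explicit than the paper on two minor points it leaves implicit -- that the initial square must split two--two between $A$ and $B$, and that vertices of $A$ itself cannot be pulled into $B'$ -- but the argument is the same.
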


\begin{proof*}
We know that $(A,B)$ contains a square; say it is $a_0a_1b_1b_0$ such that $a_0,a_1\in A$ and $b_0,b_1\in B$.  We run the SCHPOSC algorithm on $a_0a_1$; call its putative output $(A',B')$.  The fact that $(A,B)$ is a homogeneous pair of strong cliques tells us that no vertex outside $A\cup B$ will be added to $A'$ or $B'$ during the algorithm's run.  It follows that the algorithm will indeed output a homogeneous pair of strong cliques $(A',B')$, and we know that it will be square-connected.
\end{proof*}

\section{Thickenings and optimal antithickenings}\label{sec:antithickening}

Having considered the intersection structure of square-connected and deletion-minimal HPOSCs, we now consider how we might best reduce or eliminate them.  In this section we answer this question by proving that every claw-free trigraph (and therefore every claw-free graph) has a unique optimal antithickening, which we can find in $O(m^2)$ time by simultaneously contracting a collection of disjoint inclusion-maximal square-connected HPOSCs.

We begin by defining {\em thickenings} of trigraphs.  We say that a trigraph $G$ is a {\em thickening} of another trigraph $G'$ if we can construct $G$ from $G'$ in the following way:
\begin{enumerate*}
\item $V(G)$ is a collection of nonempty disjoint strong cliques $\{I(v)\mid v\in V(G')\}$.
\item If two vertices $u,v\in V(G')$ are strongly adjacent, then $I(v)$ is strongly complete to $I(u)$ in $G$.
\item If two vertices $u,v\in V(G')$ are strongly antiadjacent, then $I(v)$ is strongly anticomplete to $I(u)$ in $G$.
\item If two vertices $u,v\in V(G')$ are semiadjacent, then $I(v)$ is neither strongly complete nor strongly anticomplete to $I(u)$ in $G$.
\end{enumerate*}
In this case we say that the function $I:V(G')\rightarrow 2^{V(G)}$ is a thickening from $G'$ to $G$, where $2^{V(G)}$ denotes the power set of $V(G)$.  For every two vertices $u,v \in V(G')$ such that no semiedge has exactly one of $u,v$ as an endpoint, either $|I(u)|=|I(v)|=1$ or $(I(u),I(v))$ is a homogeneous pair of strong cliques in $G$.  If $G$ is a thickening of $G'$ then we say that $G'$ is an {\em antithickening} of $G$.  Thickenings generalize structural operations used by Chv\'atal and Sbihi \cite{chvatals88} and Maffray and Reed \cite{maffrayr99} in the characterization of Berge quasi-line graphs\footnote{Every Berge claw-free graph is quasi-line.}; we refer the reader to Chapter 5 of \cite{kingthesis} for the whole story.

\begin{observation}
The relation of being a thickening is transitive: if $G$ is a thickening of $G'$ and $G'$ is a thickening of $G''$, then $G$ is a thickening of $G''$.
\end{observation}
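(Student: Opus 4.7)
The plan is to define the composite assignment explicitly and then check each of the four clauses of the definition of ``thickening'' given earlier in this section. Fix thickenings $I:V(G')\to 2^{V(G)}$ (from $G'$ to $G$) and $J:V(G'')\to 2^{V(G')}$ (from $G''$ to $G'$), and define $K:V(G'')\to 2^{V(G)}$ by $K(w)=\bigcup_{v\in J(w)} I(v)$. I claim $K$ exhibits $G$ as a thickening of $G''$.

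For clause (1), since $\{J(w):w\in V(G'')\}$ partitions $V(G')$ into nonempty strong cliques and $\{I(v):v\in V(G')\}$ partitions $V(G)$ into nonempty strong cliques, the collection $\{K(w):w\in V(G'')\}$ partitions $V(G)$ into nonempty sets. Each $K(w)$ is a strong clique in $G$: two vertices in the same $I(v)$ are strongly adjacent because $I(v)$ is a strong clique, and two vertices in distinct $I(v_1),I(v_2)$ with $v_1,v_2\in J(w)$ are strongly adjacent because $v_1,v_2$ are strongly adjacent in $G'$ (as $J(w)$ is a strong clique) and clause (2) of $I$ then makes $I(v_1)$ strongly complete to $I(v_2)$ in $G$. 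Clauses (2) and (3) are equally straightforward: if $u,w\in V(G'')$ are strongly adjacent (resp.\ strongly antiadjacent), then $J(u)$ is strongly complete (resp.\ strongly anticomplete) to $J(w)$ in $G'$, and applying the corresponding clause of $I$ to every pair in $J(u)\times J(w)$ lifts this to $K(u)$ being strongly complete (resp.\ strongly anticomplete) to $K(w)$ in $G$.

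The one step requiring any real care is clause (4), the semiadjacent case. Suppose $u,w\in V(G'')$ are semiadjacent; then $J(u)$ is neither strongly complete nor strongly anticomplete to $J(w)$ in $G'$, so there exist $v_1\in J(u)$ and $v_2\in J(w)$ that are not strongly adjacent in $G'$. If $v_1v_2$ is strongly antiadjacent in $G'$, then any pair in $I(v_1)\times I(v_2)$ is strongly antiadjacent in $G$ by clause (3) of $I$, and in particular not strongly adjacent; if $v_1v_2$ is semiadjacent in $G'$, then clause (4) of $I$ supplies a pair in $I(v_1)\times I(v_2)\subseteq K(u)\times K(w)$ that is not strongly adjacent in $G$. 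The symmetric argument, with ``antiadjacent'' and ``adjacent'' swapped throughout, supplies $v_1'\in J(u)$ and $v_2'\in J(w)$ that are not strongly antiadjacent in $G'$ and hence a pair in $K(u)\times K(w)$ that is not strongly antiadjacent in $G$. Thus $K(u)$ is neither strongly complete nor strongly anticomplete to $K(w)$, giving clause (4). The main obstacle (such as it is) is the bookkeeping among the adjacency states under composition, which is dispatched by the case analysis above; notably, the argument uses no structural hypothesis such as claw-freeness or quasi-lineness.
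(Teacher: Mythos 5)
Your proof is correct: the composite map $K(w)=\bigcup_{v\in J(w)}I(v)$ is the natural witness, and your case analysis for the semiadjacent clause (distinguishing whether the non-strongly-adjacent pair $v_1v_2$ in $G'$ is strongly antiadjacent or semiadjacent, and the symmetric argument for non-strong-anticompleteness) covers everything. The paper states this as an observation without proof, and your argument is exactly the routine verification it tacitly relies on.
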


Thickenings tell us how homogeneous pairs of strong cliques arise in $G$: some trivial pairs will arise from pairs of vertices in $V(G')$ not incident to semiedges (i.e.\ the HPOSC will consist of two strong cliques that are either strongly complete or strongly anticomplete to one another), and some more interesting pairs will arise from semiedges in $G'$.  Actually, given a trigraph $G$, we want to find an antithickening $G'$ of $G$ such that {\em all} interesting HPOSCs in $G$ arise from semiedges of $G'$ via thickening.

\subsection{Optimal antithickenings}

The aim of antithickenings in the context of the structure theorem for quasi-line graphs and trigraphs, particularly its algorithmic applications, is to get rid of all {\em nonlinear} homogeneous pairs of strong cliques (i.e.\ $(A,B)$ such that $G|(A\cup B)$ contains a square).  So we seek antithickenings that serve this purpose while preserving the structure of the original graph or trigraph as much as possible.

This guiding principle makes the desired criteria of an optimal antithickening clear.  First, we insist that the antithickening $G'$ does not contain a nonlinear homogeneous pair of strong cliques.  Our other criterion simply guarantees that the antithickening expresses the structure of $G$ as accurately as possible, meaning that we cannot further refine our reduction of square-connected homogeneous pairs of strong cliques without violating the first criterion.

We say that a trigraph $G$ is {\em laminar} if it contains no square-connected homogeneous pair of strong cliques.  Lemma \ref{lem:inside} tells us that $G$ is laminar precisely if it contains no deletion-minimal homogeneous pair of strong cliques.  Given a trigraph $G$, we say that an antithickening $G'$ of $G$ is an {\em optimal antithickening} if:
\begin{enumerate*}
\item $G'$ is laminar, and subject to that,
\item $|V(G')|$ is maximum.
\end{enumerate*}

Our aim in this section is to prove:

\begin{theorem}\label{thm:mainantithickening}
Every connected non-degenerate trigraph has a unique optimal antithickening, which we can find in $O(m^2)$ time.
\end{theorem}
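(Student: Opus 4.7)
The plan is to define $G'$ as the trigraph obtained from $G$ by simultaneously contracting every inclusion-maximal square-connected HPOSC $(A_i,B_i)$ of $G$ into a pair of semi-adjacent vertices $a_ib_i$, and then to show that $G'$ is the unique optimal antithickening and can be built in $O(m^2)$ time. First I would verify that the contraction is well-defined: in the non-degenerate setting, any two inclusion-maximal square-connected HPOSCs are pairwise vertex-disjoint, because if they shared a vertex, Corollary~\ref{cor:skew} would forbid skew intersection, the unnamed proposition immediately following it would force intersection on both sides, and Proposition~\ref{prop:scunion} would produce a strictly larger square-connected HPOSC, contradicting maximality.

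The next step is to check that $G'$ is a laminar antithickening. The antithickening structure is immediate from construction. For laminarity, suppose $G'$ contained a square-connected HPOSC $(A',B')$ and choose a square $u_1u_2w_1w_2$ inside it with $u_j\in A'$ and $w_j\in B'$. Since $A'$ and $B'$ are strong cliques, the edges $u_1u_2$ and $w_1w_2$ must be strong, while the two diagonals are strongly anti-adjacent and the other two sides are each either strong or semi. Each semi side in $G'$ is either a pre-existing semi-edge of $G$ (singleton fibers under the thickening) or the contracted image of a maximal square-connected HPOSC; in either case I can lift the $G'$-square to a genuine square $a_1a_2b_2b_1$ in $G$ by choosing representatives that are adjacent across the semi-edges. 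Running the SCHPOSC algorithm from the $A$-side pair $a_1a_2$ then produces a square-connected HPOSC of $G$ containing both $a_1$ and $a_2$; but these two vertices lie in distinct fibers of $G'$, hence in distinct inclusion-maximal square-connected HPOSCs (or one in such a HPOSC and the other in none), contradicting the disjointness just proved.

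For optimality and uniqueness, let $G''$ be any laminar antithickening of $G$ with thickening function $J:V(G'')\to 2^{V(G)}$. I would show that for each inclusion-maximal square-connected HPOSC $(A_i,B_i)$ of $G$ there exist $u_i,v_i\in V(G'')$ with $A_i\subseteq J(u_i)$ and $B_i\subseteq J(v_i)$: if instead $A_i$ straddled two distinct fibers of $G''$, then projecting a square inside $(A_i,B_i)$ down to $G''$ produces a square in $G''$ (using that strongly anti-adjacent pairs in $G$ cannot lie in a common fiber, which is a strong clique), and an SCHPOSC-style growth argument inside $G''$ then produces a square-connected HPOSC there, contradicting laminarity of $G''$. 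Once this containment is established, the map $V(G')\to V(G'')$ sending each fiber $I(v)$ of $G'$ to the unique fiber of $G''$ containing it is a well-defined thickening from $G''$ to $G'$, so $|V(G'')|\leq|V(G')|$, with equality only when $G''\cong G'$.

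For the algorithm we run SCHPOSC from each of the $O(m)$ strongly adjacent pairs of $G$; each run takes $O(m)$ time by the earlier proposition, for a total of $O(m^2)$. The union of two overlapping outputs is again a square-connected HPOSC by Proposition~\ref{prop:scunion}, and disjointness of the inclusion-maximal representatives guarantees that simple unioning yields exactly those representatives, which are then contracted simultaneously. The main obstacle is the uniqueness/optimality step: carefully projecting squares and HPOSC structure through two nested thickenings, and extracting a square-connected HPOSC in $G''$ from a split maximal HPOSC of $G$, is the most delicate part of the argument.
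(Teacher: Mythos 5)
Your overall architecture is the same as the paper's: you build the same candidate $G'$ by contracting all inclusion-maximal square-connected HPOSCs (pairwise disjoint by Corollary~\ref{cor:skew}, the proposition following it, and Proposition~\ref{prop:scunion}), you run the same $O(m^2)$ algorithm, and you derive optimality and uniqueness from the statement that every laminar antithickening's fiber partition is coarser than that of $G'$. That last statement is essentially Lemma~\ref{lem:contained} applied to each $(A_i,B_i)$ (and the paper's proof of that lemma indeed uses only laminarity of the antithickening, not optimality, so your strengthened form is legitimate); your direct partition-refinement comparison is a mild streamlining of the paper's inductive uniqueness proof.

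The gap lies in how you justify both key claims. In the laminarity argument and again in the coarseness argument, you lift or project a square and then assert that ``running the SCHPOSC algorithm'' or ``an SCHPOSC-style growth argument'' produces a square-connected HPOSC. But the SCHPOSC algorithm returns a pair only if some homogeneous pair of strong cliques containing the given two vertices in one part actually exists; otherwise it quits and yields no contradiction. Establishing that such a pair exists is precisely the hard content. For laminarity of $G'$ you must show that the preimage $(\hat A,\hat B)$ of a putative square-connected pair $(A',B')$ of $G'$ is itself a HPOSC of $G$ --- in particular that $\hat A$ and $\hat B$ are strong cliques, i.e.\ that no semiedge of $G'$ lies within $A'$ or within $B'$. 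For coarseness you must show that the fibers of $G''$ meeting $A_i$ and those meeting $B_i$ form two disjoint strong cliques constituting a HPOSC of $G''$ that contains a square. Your parenthetical remark that strongly antiadjacent vertices of $G$ cannot share a fiber rules out collapsing only the diagonals of the square; the adjacent pairs (for instance two vertices of $A_i$, or a vertex of $A_i$ with one of $B_i$) can perfectly well land in a common fiber, and excluding this requires Corollary~\ref{cor:skew} (hence non-degeneracy, which your sketch never invokes at this point), the matching property of semiedges, and the ``$j\neq j'$'' step in the proof of Lemma~\ref{lem:contained}. These are not routine details to be deferred: they occupy most of the paper's Section~3 and are where the hypotheses of the theorem actually enter, so as written your two central steps do not yet go through.
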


We will prove that if $G'$ is an optimal antithickening of $G$, then for every semiedge $ab$ of $G'$, either $I(a)$ and $I(b)$ are singletons (and consequently their vertices must be semiadjacent), or $(I(a),I(b))$ is a square-connected homogeneous pair of cliques.  In fact, we will grow these pairs by using the SCHPOSC algorithm and combining pairs that intersect.  To see that we must assume the trigraph is non-degenerate, observe that a cycle of four strongly adjacent vertices (i.e.\ $C_4$ as a trigraph) has two optimal antithickenings.  Similarly, a copy of $C_4$ strongly complete to a copy of $C_5$ is claw-free, non-cobipartite, and has two optimal antithickenings.  It is no coincidence that these antithickenings arise from two HPOSCs with skew-intersection.

Our first lemma tells us that a square-connected homogeneous pair of cliques is always reduced to a semiedge in an optimal antithickening:

\begin{lemma}\label{lem:contained}
Let $G$ be a connected non-degenerate trigraph, and let $(A,B)$ be a square-connected homogeneous pair of cliques in $G$.  Suppose $G'$ is an optimal antithickening of $G$.  Then there exists a semiedge $ab$ of $G'$ such that $A\subseteq I(a)$ and $B\subseteq I(b)$.
\end{lemma}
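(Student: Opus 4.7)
The plan is to introduce
\[
V_A = \{v \in V(G') : I(v) \cap A \neq \emptyset\}, \qquad V_B = \{v \in V(G') : I(v) \cap B \neq \emptyset\}
\]
and to prove that $V_A = \{a\}$ and $V_B = \{b\}$ for distinct $a, b \in V(G')$ linked by a semiedge; this immediately gives $A \subseteq I(a)$ and $B \subseteq I(b)$.

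First I would establish $V_A \cap V_B = \emptyset$. Assume $v \in V_A \cap V_B$, and fix $a \in I(v) \cap A$ and $b \in I(v) \cap B$. Deletion-minimality of $(A, B)$ provides $b^* \in B$ antiadjacent to $a$ and $a^* \in A$ antiadjacent to $b$, both lying outside the strong clique $I(v)$; write $a^* \in I(v_1)$, $b^* \in I(v_2)$. The relations $aa^*, bb^*$ strongly adjacent and $ab^*, a^*b$ antiadjacent, applied to the pairs $(I(v), I(v_1))$ and $(I(v), I(v_2))$, force both $vv_1$ and $vv_2$ to be semiedges of $G'$, and the matching property of semiedges forces $v_1 = v_2 =: v^*$. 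Hence $(I(v), I(v^*))$ is a HPOSC of $G$ containing the square $a\,b\,b^*\,a^*$ (with $a, b \in I(v)$ and $a^*, b^* \in I(v^*)$). Running the SCHPOSC algorithm on the strongly adjacent pair $(a, b)$ inside $I(v)$ yields an SCHPOSC $(A^*, B^*)$ of $G$ with $A^* \subseteq I(v)$, $B^* \subseteq I(v^*)$, and $a \in A^* \cap A$, $b \in A^* \cap B$. Thus $(A, B)$ and $(A^*, B^*)$ have skew intersection, and Corollary~\ref{cor:skew} contradicts the non-degeneracy of $G$.

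Second I would show $|V_A| = 1$ (by symmetry, also $|V_B| = 1$). Suppose $|V_A| \geq 2$; fix $\alpha \in V_A$. Square-connectedness of $(A, B)$ applied to the partition $A = (A \cap I(\alpha)) \sqcup (A \setminus I(\alpha))$ (both nonempty) yields a square $a_0 a_1 b_1 b_0$ with $a_0 \in I(\alpha)$ and $a_1 \in I(\alpha_1)$ for some $\alpha_1 \neq \alpha$. Let $\beta_0, \beta_1$ be the vertices of $G'$ containing $b_0, b_1$; by Step 1 the vertices $\alpha, \alpha_1, \beta_0, \beta_1$ are pairwise distinct. If $\beta_0 = \beta_1$, the square adjacencies force both $\alpha \beta_0$ and $\alpha_1 \beta_0$ to be semiedges of $G'$, violating the matching property. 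If $\beta_0 \neq \beta_1$, the same adjacency bookkeeping (augmented with case analysis on which pairs are strongly adjacent, strongly antiadjacent, or semiadjacent) produces a square $\alpha \alpha_1 \beta_1 \beta_0$ in $G'$; seeding the SCHPOSC algorithm at an appropriate strongly adjacent pair inside this square either yields an SCHPOSC of $G'$—contradicting the laminarity of $G'$—or forces a semiedge among the four vertices whose associated HPOSC in $G$ has skew intersection with $(A, B)$, once more invoking Corollary~\ref{cor:skew}.

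Finally, writing $V_A = \{a\}$ and $V_B = \{b\}$ with $a \neq b$, I would verify that $ab$ is a semiedge of $G'$: for any $a_0 \in A \subseteq I(a)$, deletion-minimality supplies $b^+, b^- \in B \subseteq I(b)$ with $a_0 b^+$ adjacent and $a_0 b^-$ antiadjacent, so $(I(a), I(b))$ is neither strongly complete nor strongly anticomplete, whence $ab$ is semiadjacent. The main obstacle is the second step: the case $|V_A|, |V_B| \geq 2$ requires careful tracking of the nine pairwise adjacencies among four distinguished vertices of $G'$ and repeated use of the matching property of semiedges, combining the laminarity of $G'$ with the skew-intersection technology from Step~1.
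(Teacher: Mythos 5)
Your Step 1 (that $V_A\cap V_B=\emptyset$) and your closing verification that $ab$ is a semiedge are correct and essentially match the paper's argument: the paper likewise produces two semiedges at a common vertex of $G'$, identifies their far endpoints via the matching property of semiedges, and obtains a deletion-minimal HPOSC with skew intersection with $(A,B)$, contradicting non-degeneracy through Corollary \ref{cor:skew}.

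The genuine gap is in Step 2, in the case $\beta_0\neq\beta_1$. You correctly extract the square $\alpha\alpha_1\beta_1\beta_0$ in $G'$, but a square in $G'$ does not by itself contradict laminarity: \emph{laminar} means ``contains no square-connected homogeneous pair of strong cliques'', not ``contains no square'', and the SCHPOSC algorithm seeded at a strongly adjacent pair of a square may simply quit (when the growing sets cease to be strong cliques) without producing anything. Your fallback --- that failure of the algorithm ``forces a semiedge among the four vertices whose associated HPOSC in $G$ has skew intersection with $(A,B)$'' --- is not substantiated: when the algorithm quits there is no HPOSC in hand, and even a semiedge $\alpha\alpha_1$ inside $V_A$ gives a usable skew intersection only after one shows $(I(\alpha),I(\alpha_1))$ is deletion-minimal, which you have not done. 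The missing work, which is the bulk of the paper's proof, is to show that $(V_A,V_B)$ is itself a homogeneous pair of strong cliques in $G'$: first that $V_A$ and $V_B$ are strong cliques (a semiedge $\alpha\alpha_1$ inside $V_A$ forces, via deletion-minimality of $(A,B)$ and the matching property, $I(\alpha)\cup I(\alpha_1)\subseteq A$, contradicting that $A$ is a strong clique), and second that no semiedge of $G'$ has exactly one endpoint in $V_A\cup V_B$. Only then does the square you found lie inside a HPOSC of $G'$, so that the argument of Lemma \ref{lem:inside} yields a square-connected HPOSC of $G'$ and contradicts laminarity.
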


\begin{proof*}
Let $A' = \{a_1,\ldots,a_k\}$ be the set of vertices of $G'$ for which $A$ intersects $I(a_i)$ and let $B' = \{b_1,\ldots,b_\ell\}$ be the set of vertices of $G'$ for which $B$ intersects $I(b_i)$.  We may assume for a contradiction that $k\geq 2$.

First we must prove that $A'\cap B' = \emptyset$, so suppose $a_1=b_1$, let $v_A^1$ be a vertex in $A\cap I(a_1)$, and let $v_B^1$ be a vertex in $B\cap I(b_1) = B\cap I(a_1)$.  Since $(A,B)$ is deletion-minimal, $v_A^1$ is not strongly complete to $B$ and $v_B^1$ is not strongly complete to $A$.  It follows that neither $A'$ nor $B'$ is a strong clique, and moreover there must be some $i>1$ for which $a_1a_i$ is a semiedge.  Likewise there must be some $j>1$ for which $b_1b_j$ is a semiedge.  Since the semiedges of $G'$ form a matching, $a_2=b_2$.  Furthermore there must be a vertex $v_A^2$ in $A\cap I(a_2)$ and a vertex $v_B^2$ in $B\cap I(b_2)=B\cap I(a_2)$ such that $v_A^1v_A^2v_B^2v_B^1$ is a square.  Moreover, $(I(a_1),I(a_2))$ is a deletion-minimal homogeneous pair of strong cliques in $G$, and it has skew-intersection with $(A,B)$.  It follows from Corollary \ref{cor:skew} that $G$ is degenerate, a contradiction.  Therefore $A'$ and $B'$ are disjoint.

We now claim that $A'$ and $B'$ are strong cliques.  Since for distinct $i,j$,  $I(a_i)$ and $I(a_j)$ cannot be strongly anticomplete, it is clear that $A'$ is a clique; $B'$ is a clique for the same reason.  So suppose $a_1a_2$ is a semiedge.  Since $(A,B)$ is deletion-minimal, $a_1$ is neither strongly complete nor strongly anticomplete to $B'$.  And since semiedges form a matching, there are no semiedges between $\{a_1,a_2\}$ and $B'$.  Thus no vertex in $I(a_1)\cup I(a_2)$ is strongly complete or strongly anticomplete to $B$.  Therefore $I(a_1)\subseteq A$ and $I(a_2)\subseteq A$, a contradiction since $A$ is a strong clique.  So $A'$ is a strong clique and likewise so is $B'$.  

We further claim that $(A',B')$ is actually a homogeneous pair of strong cliques.  First suppose a vertex $v$ in $V(G)\setminus (A'\cup B')$ is semiadjacent to a vertex of $A'\cup B'$, say $a_1$.  Then since $(A,B)$ is deletion-minimal, there exist $i,j$ such that $a_1$ is strongly adjacent to $b_i$ and strongly antiadjacent to $b_j$, which implies that $I(a_1)\subseteq A$, a contradiction since $I(v)$ is neither complete nor anticomplete to $I(a_1)$.  Therefore no semiedge in $G'$ has exactly one endpoint in $A'\cup B'$.  It follows easily that $(A',B')$ is a homogeneous pair of strong cliques.  We now need only prove that it contains a square, since this would contradict the assumption that $G'$ is an optimal antithickening of $G$ and therefore laminar.

The first task is to prove that $\ell\geq 2$, so assume that $\ell=1$.  Since $(A,B)$ is deletion-minimal, it follows that $b_1$ must be semiadjacent to a vertex in $A'$, so we may assume that $a_1b_1$ is a semiedge.  But since the semiedges of $G'$ form a matching, $I(a_2)$ is strongly complete or strongly anticomplete to $B$, a contradiction.  Therefore $\ell \geq 2$.

We claim that there exist $i$, $i'$, $j$, and $j'$ such that some square in $G$ intersects $I(a_i)$, $I(a_{i'})$, $I(b_j)$, and $I(b_{j'})$.  To see this, first note that since $(A,B)$ is square-connected, there is a square in $A\cup B$ intersecting both $A\cap I(a_1)$ and $A\setminus I(a_1)$; assume without loss of generality that it intersects $I(a_1)$ and $I(a_2)$.  Therefore there exist $j$ and $j'$ such that $a_1$ is adjacent to $b_j$ and antiadjacent to $b_{j'}$, and $a_2$ is adjacent to $b_{j'}$ and antiadjacent to $b_j$.  To see that $j\neq j'$, note that if $j=j'$ then $b_j$ is semiadjacent to both $a_i$ and $a_{i'}$, which is impossible.  The claim follows, so $A'\cup B'$ contains a square, and therefore contains a square-connected pair of cliques, contradicting the fact that $G'$ must be laminar.
\end{proof*}

\begin{corollary}
Let $G$ be a connected non-degenerate trigraph containing two square-connected homogeneous pairs of strong cliques $(A_1,B_1)$ and $(A_2,B_2)$ without skew intersection.  If $A_1\cap A_2\neq \emptyset$ and $B_1\cap B_2\neq \emptyset$ and $G'$ is an optimal antithickening of $G$, then there exists a semiedge $ab$ of $G'$ such that $A_1\cup A_2 \subseteq I(a)$ and $B_1\cup B_2 \subseteq I(b)$.
\end{corollary}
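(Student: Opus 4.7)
The plan is to derive the corollary directly by combining two previously established results, so the proof should be essentially a one-line application with a brief consistency check on the hypotheses.

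First I would invoke Proposition \ref{prop:scunion}. The hypotheses of that proposition are exactly what is given here: $(A_1,B_1)$ and $(A_2,B_2)$ are square-connected homogeneous pairs of strong cliques without skew intersection, with $A_1\cap A_2\neq\emptyset$ and $B_1\cap B_2\neq\emptyset$. Therefore $(A_1\cup A_2,\,B_1\cup B_2)$ is itself a square-connected homogeneous pair of strong cliques in $G$.

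Next I would apply Lemma \ref{lem:contained} to this single combined pair $(A_1\cup A_2,\,B_1\cup B_2)$ inside the optimal antithickening $G'$ of the connected non-degenerate trigraph $G$. The lemma immediately yields a semiedge $ab$ of $G'$ with $A_1\cup A_2 \subseteq I(a)$ and $B_1\cup B_2 \subseteq I(b)$, which is the desired conclusion.

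There is no real obstacle here; the only thing to double-check is that Proposition \ref{prop:scunion} is being applied under exactly the stated hypotheses and that Lemma \ref{lem:contained} requires only that $G$ be a connected non-degenerate trigraph and $(A,B)$ a square-connected homogeneous pair of cliques, both of which hold. So the entire proof reduces to two sentences chaining these results together.
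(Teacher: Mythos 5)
Your proof is correct and is exactly the paper's argument: the paper also proves this corollary by citing Proposition \ref{prop:scunion} to form the combined square-connected pair and then applying Lemma \ref{lem:contained} to it.
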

\begin{proof*}
This follows immediately from Lemma \ref{lem:contained} and Proposition \ref{prop:scunion}.
\end{proof*}

We now want to show that every set that reduces to a semiedge $ab$ in an optimal antithickening is either a semiedge or a SCHPOSC.  Our approach is to take $(I(a),I(b))$, prove that it is deletion-minimal and therefore contains a square, build a minimal SCHPOSC $(A,B)$ containing this square, and finally prove that all of $(I(a),I(b))$ must be a SCHPOSC.  Proving that $(I(a),I(b))$ is deletion-minimal is tedious enough that we wish to separate it from the rest of the proof.

\begin{lemma}\label{lem:imageisdm}
Let $G$ be a connected non-degenerate trigraph and let $G'$ be an optimal antithickening of $G$.  Then for any semiedge $ab$ of $G'$, $(I(a),I(b))$ is a deletion-minimal homogeneous pair of strong cliques or a semiedge.
\end{lemma}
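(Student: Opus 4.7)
If $|I(a)|=|I(b)|=1$, the semiadjacency of $a$ and $b$ in $G'$ lifts directly to a semiedge of $G$, giving the second alternative, so assume $|I(a)|+|I(b)|\geq 3$. I would first verify that $(I(a),I(b))$ is a HPOSC of $G$: both parts are nonempty disjoint strong cliques by the definition of a thickening, and not both singletons. For the homogeneity condition, since the semiedges of $G'$ form a matching, $a$ is not semiadjacent to any $c\neq b$, so $c$ is strongly adjacent or strongly antiadjacent to $a$ in $G'$; this forces $I(c)$ to be strongly complete or strongly anticomplete to $I(a)$, and the argument for $I(b)$ is symmetric.

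I would then argue by contradiction, assuming $(I(a),I(b))$ is not deletion-minimal, and construct an antithickening $G''$ of $G$ with $|V(G'')|>|V(G')|$ that is still laminar, contradicting the optimality of $G'$. The construction splits into two cases. If some vertex $v\in I(a)$ is strongly complete or strongly anticomplete to $I(b)$ (or the symmetric failure on the $b$ side occurs), then since $ab$ is semiadjacent $v$ is not all of $I(a)$, so I would split $a$ into $a_1$ with $I''(a_1)=\{v\}$ and $a_2$ with $I''(a_2)=I(a)\setminus\{v\}$; the pairs $a_1a_2$ and $a_1b$ become strongly adjacent, and $a_2b$ is semiadjacent or strongly antiadjacent as dictated by the relationship of $I(a)\setminus\{v\}$ with $I(b)$. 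If instead $(I(a),I(b))$ satisfies (DM2) and (DM3) but contains no square, I would refine $a$ and $b$ along the canonical partitions of $I(a)$ and $I(b)$ into classes of vertices having identical profiles with respect to the opposite part; the absence of a square, combined with the matching condition on semiedges and the fact that $ab$ is itself a semiedge, forces this refinement to be nontrivial. In either case one routinely verifies that $G''$ is a valid antithickening of $G$ and strictly larger than $G'$.

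The main obstacle is verifying that $G''$ is laminar. I would suppose $G''$ contains a square-connected HPOSC $(A,B)$ and aim to produce one in $G'$, contradicting laminarity. A SCHPOSC disjoint from the new split vertices lifts immediately. Otherwise, the key observation is that for any vertex $c\neq b$, $c$'s strong-adjacency status with each split copy of $a$ in $G''$ agrees with its strong-adjacency status with $a$ in $G'$. Using this, when both split copies sit in the same part of $(A,B)$, I can merge them back to reconstruct a SCHPOSC in $G'$, transferring both the homogeneity condition and the square inside $A\cup B$. The delicate subcases are those in which only one split copy appears in $(A,B)$, or in which split copies of $a$ and $b$ appear on opposite sides of $(A,B)$. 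Here one must either show that the other split copy is strongly complete or anticomplete to $A\cup B$ (permitting an analogous merge) or use Corollary~\ref{cor:skew}: the partial membership produces a second deletion-minimal HPOSC of $G$ that has skew intersection with $(I(a),I(b))$, contradicting the non-degeneracy of $G$. This case analysis is the technical heart of the proof.
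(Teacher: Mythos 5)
Your setup and your first case (splitting off a vertex $v\in I(a)$ that is strongly complete or strongly anticomplete to $I(b)$) match the paper's argument, and your overall strategy --- contradict optimality by exhibiting a strictly larger laminar antithickening $G''$ --- is exactly the paper's. But the proof has a genuine gap precisely where you flag ``the technical heart'': the laminarity of $G''$ is asserted, not proved. In the paper this verification is the bulk of the proof. For the first case one must show that any deletion-minimal HPOSC $(A'',B'')$ of $G''$ necessarily contains the new vertex $a'$, that deletion-minimality then forces $a\in A''$ and $b\in B''$ (using that $a'$ is on no semiedge, so it must have both a strong neighbour and a strong antineighbour in $B''$), and that deleting $a'$ leaves a deletion-minimal HPOSC of $G'$. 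None of this is routine lifting of the kind you describe. Moreover, your fallback of invoking Corollary~\ref{cor:skew} ``with $(I(a),I(b))$'' cannot work as stated: that corollary requires two \emph{deletion-minimal} pairs, and the standing hypothesis of your contradiction is that $(I(a),I(b))$ is \emph{not} deletion-minimal.

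Your second case also diverges from the paper in a way that makes the missing laminarity argument harder rather than easier. You refine $I(a)$ and $I(b)$ into \emph{all} profile classes, producing potentially many new vertices; a square-connected HPOSC of $G''$ could then meet several classes on each side, and your merging argument (phrased for ``both split copies'' and ``only one split copy'') does not cover these configurations. The paper instead first proves a structural claim --- since $G|(I(a)\cup I(b))$ has no square, there exist $a_1\in I(a)$, $b_1\in I(b)$ with $a_1b_1$ a semiedge, $a_1$ strongly complete to $I(b)\setminus\{b_1\}$, and $b_1$ strongly anticomplete to $I(a)\setminus\{a_1\}$ --- and uses it to add exactly one new semiedge $a'b'$ with fully prescribed adjacencies; this is what makes the subsequent argument (any SCHPOSC of $G''$ must contain all of $a,a',b,b'$, and stripping $a',b'$ yields a deletion-minimal pair in $G'$) tractable. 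You would need either to establish an analogue of that claim or to carry out the considerably longer case analysis your construction entails; as written, neither is done.
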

\begin{proof}
Clearly if $I(a)$ and $I(b)$ are both singletons then the vertices are semiadjacent, so assume that $|I(a)|+|I(b)|\geq 3$.  Thus $(I(a),I(b))$ is a homogeneous pair of strong cliques; to reach a contradiction we assume it is not deletion-minimal.

First suppose that $I(a)$ contains a vertex $u$ that is strongly complete to $I(b)$.  We will show that we can remove $u$ from $I(a)$ and contradict the optimality of $G'$.

Let $G''$ be the trigraph constructed from $G'$ by adding a vertex $a'$ and making it strongly adjacent to $a$, $b$, and every vertex that is strongly adjacent to $a$ in $G'$, and strongly antiadjacent to everything else (see Figure \ref{fig:optimalantithickening}(a)) (note that $a$ is not semiadjacent to any vertex other than $b$).  It is easy to see that $G''$ is an antithickening of $G$ with $I(a')=\{u\}$.  It has more vertices than $G'$, so to contradict the optimality of $G'$ it suffices to show that $G''$ is laminar.

\begin{figure}
\begin{center}
\includegraphics[scale=0.7]{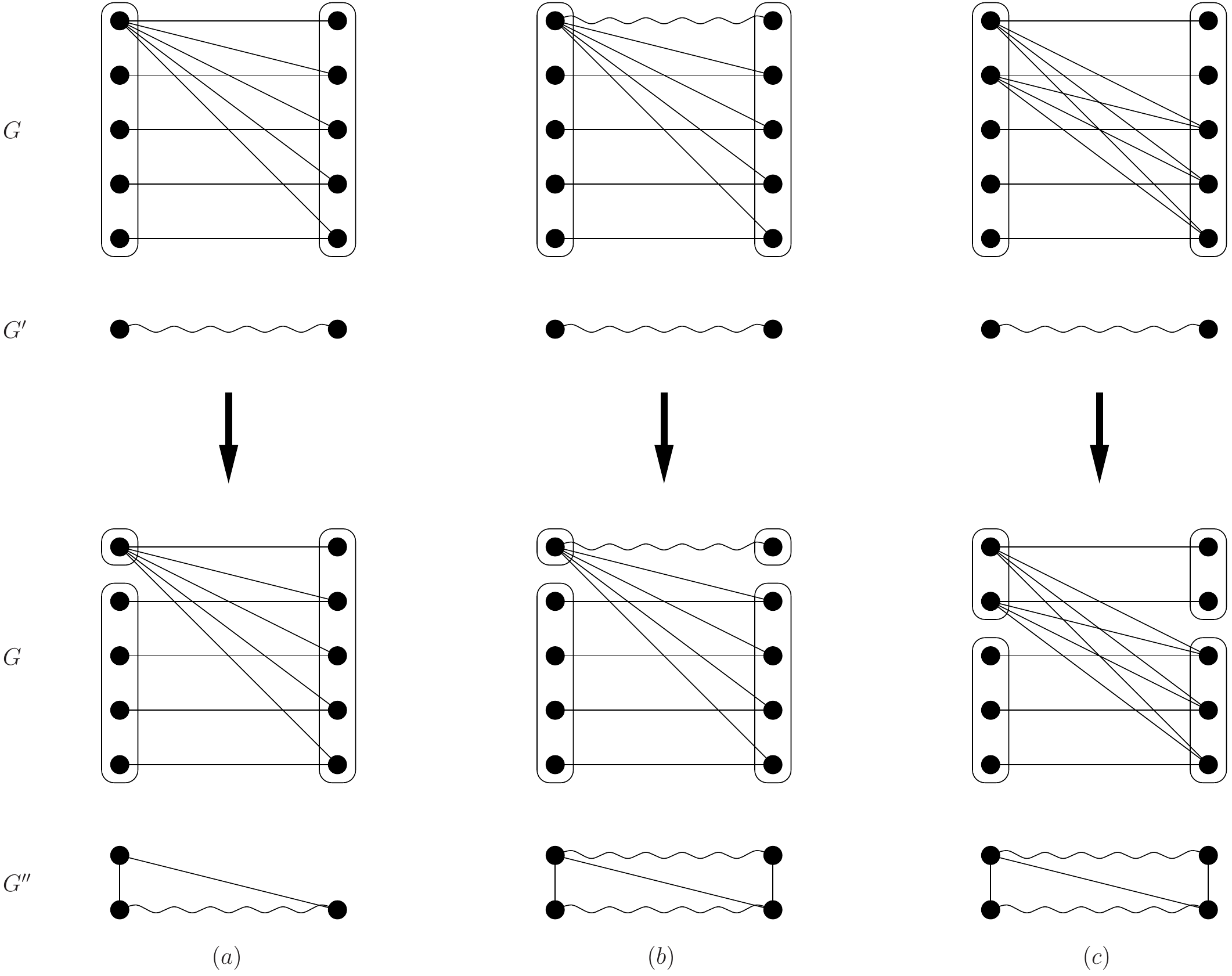}
\end{center}
\caption{\footnotesize{Three ways of refining an antithickening.}}
\label{fig:optimalantithickening}
\end{figure}

To show this, it suffices to prove that $G''$ does not contain a deletion-minimal homogeneous pair of strong cliques $(A'',B'')$, so suppose one exists.  Clearly $a'$ is contained in $A''$ or $B''$, otherwise $(A'',B'')$ would be a deletion-minimal homogeneous pair of strong cliques in $G'$.  So assume $a'\in A''$.  Since $(A'',B'')$ is deletion-minimal and $a'$ is not the endpoint of a semiedge, $a'$ is strongly adjacent to some vertex in $B''$ and is strongly antiadjacent to another vertex in $B''$.  It follows that $a$ is also in $A''$, and therefore $b$ must be in $B''$.  It follows easily from the definition of $G''$ that $(A''\setminus \{a'\}, B'')$ is a deletion-minimal homogeneous pair of strong cliques in $G'=G''-a'$, contradicting the assumption that it is an optimal antithickening of $G$. The case in which $u$ is anticomplete to $I(b)$ follows in the same way; we omit the details.

We must now consider the case in which $(I(a),I(b))$ is not deletion-minimal, no vertex in $I(a)$ is strongly complete or strongly anticomplete to $I(b)$, and no vertex in $I(b)$ is strongly complete or strongly anticomplete to $I(a)$.  Clearly neither $I(a)$ nor $I(b)$ is a singleton.  By the definition of a deletion-minimal homogeneous pair of strong cliques, $G|(I(a)\cup I(b))$ does not contain a square.  We claim that there must exist (without loss of generality) some $a_1\in I(a)$ and $b_1 \in I(b)$ such that $a_1b_1$ is a semiedge, $a_1$ is strongly complete to $I(b)\setminus \{b_1\}$, and $b_1$ is strongly anticomplete to $I(a)\setminus \{a_1\}$.

To see this, assume without loss of generality that there exist strongly adjacent $a_1\in A$ and $b_2\in B$; choose $a_1\in I(a)$ maximizing the number of vertices in $I(b)$ strongly adjacent to $a_1$, and choose $b_1\in I(b)$ antiadjacent to $a_1$.  Suppose $b_1$ is strongly antiadjacent to $I(a)\setminus\{a_1\}$.  Then $a_1$ and $b_1$ are semiadjacent, and if the claim is not true then there exists $b_3\in I(b)$ strongly antiadjacent to $a_1$; there must exist $a_2 \in I(a)$ adjacent to $b_3$, giving us a square and a contradiction.  So there must exist $a_2\in I(a)\setminus \{a_1\}$ adjacent to $b_2$.  By our choice of $a_1$, there must be some $b_3$ strongly adjacent to $a_1$ but antiadjacent to $b_2$, since $b_1$ cannot be semiadjacent to both $a_1$ and $a_2$.  Again this gives us a square and a contradiction, proving the claim.

Construct $G''$ by adding vertices $a'$ and $b'$ to $G'$ such that $a'b'$ is a semiedge, $a'$ is strongly adjacent to $a$ and precisely those vertices in $V(G')\setminus \{a\}$ that are adjacent to $a$, and $b'$ is strongly adjacent to $b$ and strongly antiadjacent to precisely those vertices in $V(G')\setminus \{b\}$ that are antiadjacent to $b$ (see Figure \ref{fig:optimalantithickening}$(b)$).  Clearly $G''$ is an antithickening of $G$, so suppose it contains a square-connected homogeneous pair of strong cliques $(A'',B'')$.  Since $G'$ is laminar, this pair must contain one and therefore both of $a'$ and $b'$.  And since $G''-a-b$ is isomorphic to $G'$, the pair must contain one and therefore both of $a$ and $b$.  We can therefore assume $\{a,a'\}\subseteq A''$ and $\{b,b'\}\subseteq B''$.

It remains to contradict the fact that $G'$ is laminar by proving that $(A''\setminus\{a'\},B''\setminus\{b'\})$ is a deletion-minimal homogeneous pair of strong cliques; first we must prove that it contains a square.  Since $(A'',B'')$ is square-connected, there is a square in $(A'',B'')$ containing $a$.  Observe that this square contains at most one of $b$ and $b'$, so if it does not contain $b'$ then we are done; assume it contains $b'$ but not $b$.  Then removing $b'$ from this square and replacing it with $b$ gives us another square, so indeed $(A''\setminus\{a'\},B''\setminus\{b'\})$ contains a square.  Now suppose without loss of generality that some vertex $v$ in $A''\setminus\{a'\}$ is strongly complete or strongly anticomplete to $B''\setminus\{b'\}$.  Clearly $v\neq a$, but this means that $v$ is strongly complete or strongly anticomplete to $B''$, contradicting the fact that $(A'',B'')$ is deletion-minimal.  Thus it indeed follows that $G'$ is not laminar, a contradiction.
\end{proof}

We need a few more routine results before describing our antithickening algorithm.

\begin{lemma}\label{lem:vertexissc}
Let $(A,B)$ be a homogeneous pair of strong cliques in a trigraph $G$, and let $(A_1,B_1)$ be a square-connected homogeneous pair of strong cliques in $G$ such that $A_1\subset A$ and $B_1\subseteq B$.  If there is no square intersecting both $A_1\cup B_1$ and $(A\cup B)\setminus (A_1\cup B_1)$, then every vertex of $A\setminus A_1$ is strongly complete or strongly anticomplete to $B_1$, and every vertex of $B\setminus B_1$ is either strongly complete or strongly anticomplete to $A_1$.
\end{lemma}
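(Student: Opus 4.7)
The plan is to prove the first assertion by contradiction; the second follows by the symmetric argument exchanging the roles of $A$ with $B$ and $A_1$ with $B_1$.

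Fix $v \in A \setminus A_1$ (nonempty by the hypothesis $A_1 \subsetneq A$) and suppose, for a contradiction, that $v$ is neither strongly complete nor strongly anticomplete to $B_1$. Since $(A_1,B_1)$ is square-connected it contains a square and so $|B_1|\geq 2$, and since semiedges form a matching $v$ is semiadjacent to at most one vertex of $B_1$. I will partition $B_1$ into $X$ (vertices adjacent to $v$) and $Y$ (vertices antiadjacent to $v$), placing any semi-neighbour of $v$ on whichever side would otherwise be empty; by the choice of $v$ both $X$ and $Y$ end up nonempty.

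Applying square-connectedness of $(A_1,B_1)$ to this partition of $B_1$, some square in $A_1 \cup B_1$ meets both $X$ and $Y$. Since $A_1$ and $B_1$ are strong cliques, every such square has exactly two vertices in each, so we may label the square as $w_1w_2w_3w_4$ with $w_1,w_2\in A_1$, $w_3\in X$, $w_4\in Y$, and $\{w_1,w_3\}$, $\{w_2,w_4\}$ the antiadjacent diagonals. Thus $w_2$ is adjacent to $w_3$ and antiadjacent to $w_4$, precisely the relationship $v$ has with $w_3$ and $w_4$; moreover $vw_1$ is strongly adjacent because $A$ is a strong clique, and $w_3w_4$ is adjacent because $B$ is a strong clique. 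It follows that $w_1vw_3w_4$ is a square in $G$ with antiadjacent diagonals $\{w_1,w_3\}$ and $\{v,w_4\}$, containing $v\in(A\cup B)\setminus(A_1\cup B_1)$ and $w_1\in A_1\cup B_1$, which contradicts the hypothesis that no square meets both sides of this boundary.

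The one mildly delicate point is the correct placement of a potential semi-neighbour of $v$ when defining $X$ and $Y$; once $|B_1|\geq 2$ is in hand this is routine. After that, the argument reduces to a single vertex-swap in a square guaranteed by square-connectedness, so I do not anticipate any further obstacle.
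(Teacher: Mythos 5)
Your proposal is correct and follows essentially the same route as the paper: partition $B_1$ into the ($v$-)adjacent and antiadjacent parts (handling a possible semi-neighbour so both are nonempty), invoke square-connectedness to get a square crossing the partition, and swap one $A_1$-vertex for $v$ to produce a square violating the hypothesis. The only difference is that you spell out the vertex swap that the paper dismisses with ``it follows easily.''
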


\begin{proof}
Assume that no square in $(A,B)$ intersects both $A_1\cup B_1$ and $(A\cup B)\setminus (A_1\cup B_1)$.  Now suppose some $v$ in $A\setminus A_1$ is adjacent to a vertex of $B_1$ and strongly antiadjacent to a vertex of $B_1$.  Then let $B_1'$ denote the set of vertices in $B_1$ adjacent to $v$, and let $B_1''$ denote $B_1\setminus B_1''$, and note that both sets are nonempty.  Since $(A_1,B_1)$ is square-connected, there is a square in $A_1\cup B_1$ intersecting both $B_1'$ and $B_1''$, and it follows easily that there is a square in $(A,B)$ intersecting both $A_1$ and $A\setminus A_1$.  The same argument applies if $v$ is antiadjacent to a vertex of $B_1$ and strongly adjacent to a vertex of $B_1$, so indeed $v$ is either strongly complete or strongly anticomplete to $B_1$.

The symmetric argument tells us that every vertex of $B\setminus B_1$ is either strongly complete or strongly anticomplete to $A_1$.
\end{proof}

\begin{lemma}\label{lem:squareissc}
Let $(A,B)$ be a homogeneous pair of strong cliques in a trigraph $G$, and let $(A_1,B_1)$ and $(A_2,B_2)$ be disjoint square-connected homogeneous pairs of strong cliques in $G$ such that $A_1\cup A_2\subseteq A$ and $B_1\cup B_2\subseteq B$.  Suppose there is no square intersecting both $A_1\cup B_1$ and $A_2\cup B_2$.  Then either
\begin{itemize}
\item $A_2$ is strongly complete to $B_1$ and $B_2$ is strongly anticomplete to $A_1$, or
\item $A_2$ is strongly anticomplete to $B_1$ and $B_2$ is strongly anticomplete to $A_1$.
\end{itemize}
\end{lemma}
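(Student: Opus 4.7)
The plan is to proceed in two stages: first show that the relationship between $A_2$ and $B_1$, and the relationship between $B_2$ and $A_1$, are each uniformly strongly complete or strongly anticomplete, giving four candidate configurations; then eliminate the two configurations that are inconsistent with the no-cross-square hypothesis.

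For the uniformity stage, I would exploit the HPOSC properties directly. Since $B_1$ lies entirely outside $A_2\cup B_2$, the HPOSC property of $(A_2,B_2)$ makes every individual vertex of $B_1$ strongly complete or strongly anticomplete to $A_2$; symmetrically, the HPOSC property of $(A_1,B_1)$ makes every individual vertex of $A_2$ strongly complete or strongly anticomplete to $B_1$. Combining these two sources of uniformity, a single strongly adjacent pair $(a,b)$ with $a\in A_2$, $b\in B_1$ forces $b$ to be strongly complete to all of $A_2$ and then every other $a'\in A_2$ to be strongly complete to all of $B_1$; dually, a single strongly antiadjacent pair propagates to the opposite extreme. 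Hence $A_2$ is uniformly strongly complete or strongly anticomplete to $B_1$, and the symmetric argument handles $B_2$ versus $A_1$.

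For the case-elimination stage, I would lean on the square-connectedness (hence deletion-minimality) of both inner pairs. From $(A_1,B_1)$ I extract a square $v_1v_2v_3v_4$ with $v_1,v_2\in A_1$ and $v_3,v_4\in B_1$ and antiadjacent diagonals $v_1v_3$, $v_2v_4$; from the deletion-minimality of $(A_2,B_2)$ I extract both a strongly adjacent pair and a strongly antiadjacent pair across $A_2\times B_2$. In each configuration to be forbidden, I would select appropriate $u\in A_2$, $w\in B_2$ and two of the $v_i$'s, then verify that $\{v_i,v_j,u,w\}$ induces a $C_4$ whose two antiedges consist of one inter-pair edge (from the configuration's anticompleteness) together with one intra-pair antiedge (from either the $(A_1,B_1)$ diagonal structure or the chosen $(A_2,B_2)$ antiadjacent pair). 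The remaining four edges are adjacent automatically because $A$ and $B$ are each strong cliques and the complementary inter-pair relation supplies adjacency on the remaining cross pairs. Such a $C_4$ crosses both $A_1\cup B_1$ and $A_2\cup B_2$, contradicting the hypothesis.

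The main obstacle will be the case-by-case bookkeeping in the second stage: for each configuration one must choose $u$, $w$, and the two $v_i$'s so that the two resulting antiedges form a matching rather than share a vertex—this is exactly the condition for $\{v_i,v_j,u,w\}$ to induce a $C_4$ rather than a path or triangle-plus-pendant—and one must verify all six pairwise adjacencies explicitly. The correct pairings of diagonal-type antiedges with intra-pair antiedges differ between the two forbidden configurations, and it is this matching verification rather than any deep structural argument that forms the technical heart of the proof. Once the obstructing squares have been produced in those cases, the two configurations that survive are exactly the two appearing in the stated conclusion.
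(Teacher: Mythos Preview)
Your uniformity stage is correct and in fact cleaner than the paper's. The paper first invokes Lemma~\ref{lem:vertexissc} (which uses the square-connectedness of $(A_1,B_1)$) to get that each vertex of $A_2$ is strongly complete or strongly anticomplete to $B_1$, and then uses the square-connectedness of $(A_2,B_2)$ to rule out a mixed split by manufacturing a crossing square. Your two-sided use of the HPOSC property gets the same conclusion by pure propagation, with no square construction needed at this stage.

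The case-elimination stage, however, will not go through as you describe. Of the two configurations you must exclude to match the stated conclusion, the both-complete case ($A_2$ strongly complete to $B_1$ and $B_2$ strongly complete to $A_1$) does yield a crossing square, exactly as in the paper: pick antiadjacent $a_1b_1$ in $(A_1,B_1)$ and antiadjacent $a_2b_2$ in $(A_2,B_2)$, and $a_1a_2b_1b_2$ is the required $C_4$. But the remaining configuration you need to kill---$A_2$ strongly anticomplete to $B_1$ while $B_2$ is strongly complete to $A_1$---does \emph{not} produce a crossing square. A direct check on four-vertex subsets shows that every candidate has either too many antiedges or two antiedges sharing a vertex; concretely, two disjoint copies of a $4$-cycle placed as $(A_1,B_1)$ and $(A_2,B_2)$ with exactly this cross pattern satisfy all the hypotheses. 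This configuration is simply the image of the first bullet under swapping the indices $1$ and $2$, and the paper's proof absorbs it via its ``without loss of generality'' step, explicitly ruling out only the both-complete case; the second bullet of the printed statement appears to be a typo. So your square-hunting strategy cannot eliminate that case, and the bookkeeping you flag as the main obstacle is not where the real difficulty lies.
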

\begin{proof}
It follows easily from Lemma \ref{lem:vertexissc} that every vertex in $A_2$ is either strongly complete or strongly anticomplete to $B_1$.  Suppose there exist $a,a'\in A_2$ such that $a$ is strongly complete to $B_1$ and $a'$ is strongly anticomplete to $B_1$.  Now let $b,b'\in B_2$ be vertices such that $aa'b'b$ is a square.  Then observe that for any $b''\in B_1$, $aa'b'b''$ is a square, a contradiction.  Thus $A_2$ is either strongly complete or strongly anticomplete to $B_1$.  By the symmetric argument, $B_2$ is either strongly complete or strongly anticomplete to $A_1$.  Assume without loss of generality that $A_2$ is strongly complete to $B_1$; it remains to show that $B_2$ is strongly anticomplete to $A_1$, so suppose to the contrary that $B_2$ is strongly complete to $A_1$.  Thus there cannot exist an antiadjacent pair of vertices in each of  $(A_1,B_1)$ and $(A_2,B_2)$, which of course contradicts the fact that both pairs are square-connected.
\end{proof}

\begin{corollary}\label{cor:imageissc}
Let $G$ be a connected non-degenerate trigraph and let $G'$ be an optimal antithickening of $G$.  Then for any semiedge $ab$ of $G'$, $(I(a),I(b))$ is a square-connected homogeneous pair of strong cliques or a semiedge.
\end{corollary}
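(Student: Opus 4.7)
By Lemma \ref{lem:imageisdm}, $(I(a),I(b))$ is either a semiedge (in which case the claim is immediate) or a deletion-minimal homogeneous pair of strong cliques. Assume the latter. Lemma \ref{lem:inside} then supplies a square-connected HPOSC $(A_0,B_0)\subseteq (I(a),I(b))$. The plan is to enlarge $(A_0,B_0)$ to an inclusion-maximal square-connected HPOSC $(\hat A,\hat B)\subseteq (I(a),I(b))$ using Proposition \ref{prop:scunion} to merge any two intersecting SC HPOSCs, with Corollary \ref{cor:skew} ensuring that no skew intersections can occur because $G$ is non-degenerate. I want to show $(\hat A,\hat B)=(I(a),I(b))$, which is exactly what the corollary asserts.

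Suppose for contradiction that the containment is proper; without loss of generality $\hat A\subsetneq I(a)$. The plan is then to refine $G'$ into a strictly larger laminar antithickening $G''$, contradicting the optimality of $G'$. The first step is to argue that no square of $G|(I(a)\cup I(b))$ can straddle $(\hat A,\hat B)$ and its complement in $I(a)\cup I(b)$: given such a square, running the SCHPOSC algorithm on one of its strongly adjacent $A$-side or $B$-side pairs produces a square-connected HPOSC that stays inside the HPOSC $(I(a),I(b))$ and shares vertices with $(\hat A,\hat B)$; Proposition \ref{prop:scunion} together with Corollary \ref{cor:skew} then yields a strictly larger SC HPOSC, contradicting maximality. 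This puts us in the hypotheses of Lemma \ref{lem:vertexissc}, and so $I(a)\setminus \hat A$ partitions into strong cliques $A^+,A^-$ consisting of vertices strongly complete (respectively strongly anticomplete) to $\hat B$, and likewise $I(b)\setminus \hat B$ partitions into $B^+,B^-$.

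I then construct $G''$ from $G'$ by replacing $a$ with up to three new vertices corresponding to $\hat A$, $A^+$, $A^-$, and similarly replacing $b$; the original semiedge $ab$ becomes a new semiedge between the $\hat A$- and $\hat B$-vertices (this is a valid single semiedge per vertex because $(\hat A,\hat B)$ is a SC HPOSC), while the pairs $(\hat A,B^\pm)$ and $(A^\pm,\hat B)$ are forced to be strongly complete or strongly anticomplete by the classification above. The four remaining pairs $(A^\epsilon,B^\delta)$ with $\epsilon,\delta\in\{+,-\}$ require Lemma \ref{lem:squareissc}: applied to $(\hat A,\hat B)$ versus a hypothetical disjoint SC HPOSC supported inside $A^\epsilon\cup B^\delta$, it forces each such pair to be strongly complete or strongly anticomplete in $G$ (a mixed relationship would produce either an extension of $(\hat A,\hat B)$ or a disjoint SC HPOSC violating the conclusion of Lemma \ref{lem:squareissc}, given how $A^\pm,B^\pm$ were defined relative to $\hat A,\hat B$).

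The main obstacle is verifying that $G''$ is laminar: any SC HPOSC in $G''$ pulls back through the thickening map to a SC HPOSC in $G$ that is contained in $I(a)\cup I(b)$, and a case analysis on how this pulled-back pair meets the partition $\hat A,A^+,A^-,\hat B,B^+,B^-$ yields a contradiction in each case. If it intersects $(\hat A,\hat B)$ on both sides we contradict the maximality of $(\hat A,\hat B)$ via Proposition \ref{prop:scunion}; if it is disjoint from $(\hat A,\hat B)$ we contradict the no-straddling-square property combined with Lemma \ref{lem:squareissc}; and the asymmetric cases (intersecting on only one side) are excluded by Corollary \ref{cor:skew} applied to non-degenerate $G$. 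This parallels the laminarity verifications carried out in the proof of Lemma \ref{lem:imageisdm}, and once it is completed, $G''$ is a laminar antithickening of $G$ with more vertices than $G'$, the desired contradiction.
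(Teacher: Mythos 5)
Your opening moves match the paper's: Lemma \ref{lem:imageisdm} reduces to the deletion-minimal case, and the contradiction is obtained by refining $G'$ to a larger laminar antithickening. The no-straddling-square argument for a maximal square-connected pair $(\hat A,\hat B)$ is also sound. But there is a genuine gap in the construction of $G''$: your claim that the four pairs $(A^\epsilon,B^\delta)$ are forced to be strongly complete or strongly anticomplete is false for the ``antidiagonal'' pairs $(A^+,B^-)$ and $(A^-,B^+)$. Lemma \ref{lem:squareissc} only constrains how a second square-connected HPOSC $(A_2,B_2)$ relates to $(\hat A,\hat B)$ (namely $A_2$ versus $\hat B$ and $B_2$ versus $\hat A$); it says nothing about the internal adjacency of $A^+\cup B^-$. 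Indeed a second square-connected HPOSC with $A_2\subseteq A^+$ and $B_2\subseteq B^-$ satisfies exactly the first alternative of Lemma \ref{lem:squareissc} ($A_2$ strongly complete to $\hat B$, $B_2$ strongly anticomplete to $\hat A$), creates no straddling square, and makes $(A^+,B^-)$ mixed. This is not a pathological case --- it is precisely the situation the corollary must handle, namely $(I(a),I(b))$ containing two or more disjoint maximal square-connected pairs. In that situation your $G''$, as described, is not a valid antithickening (a strongly adjacent or strongly antiadjacent pair $a^+b^-$ whose images are neither strongly complete nor strongly anticomplete violates the thickening conditions), and the laminarity case ``disjoint from $(\hat A,\hat B)$'' does not yield the contradiction you assert, since such a pair contradicts neither the no-straddling property nor Lemma \ref{lem:squareissc}.

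The fix is what the paper does: instead of extracting a single maximal pair and lumping the remainder into $A^\pm,B^\pm$, extract a full collection $(A_1,B_1),\ldots,(A_k,B_k)$ of pairwise disjoint square-connected HPOSCs inside $(I(a),I(b))$ such that \emph{every} square of $I(a)\cup I(b)$ lies inside some $A_i\cup B_i$ (this exists by the SCHPOSC algorithm together with Proposition \ref{prop:scunion} and Corollary \ref{cor:skew}). Each $(A_i,B_i)$ is contracted to its own semiedge $a_ib_i$, and the leftover vertices of $A\setminus\bigcup_iA_i$ and $B\setminus\bigcup_iB_i$ are kept as \emph{singletons} retaining their original mutual adjacencies, rather than being merged into two vertices per side. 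Laminarity of the resulting $G''$ then follows because any square among the new vertices would lift to a square of $G$ meeting some $A_i\cup B_i$ without being contained in it, contradicting the choice of the collection. Your coarser three-way split per side cannot be repaired without essentially reintroducing this finer decomposition.
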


\begin{proof}
Suppose that $(I(a),I(b))$ is not square-connected.  Lemma \ref{lem:imageisdm} tells us that $(I(a),I(b))$ is deletion-minimal.
Let $(A_1,B_1), (A_2,B_2),\ldots, (A_k,B_k)$ be a set of pairwise disjoint square-connected homogeneous pairs of strong cliques such that $A_1\cup\ldots\cup A_k \subseteq A$ and $B_1\cup\ldots\cup B_k\subseteq B$ such that every square in $A\cup B$ lies completely inside some $A_i\cup B_i$.  The fact that $(A_1,B_1),\ldots,(A_k,B_k)$ exist follows from the SCHPOSC algorithm and Proposition \ref{prop:scunion}.  Since $(A,B)$ is not square-connected, $A_1\cup B_1\neq A\cup B$.  Let the vertices of $A\setminus (\cup_{i}A_i)$ be $a'_1,\ldots, a'_{\ell_1}$ and let the vertices of $B\setminus (\cup_{i}B_i)$ be $b'_1,\ldots, b'_{\ell_2}$.

Informally speaking, we will construct an antithickening $G''$ of $G$ by using the same antithickening as between $G$ and $G'$ except on $A$ and $B$.  On $A$ and $B$ we will proceed by simply contracting each $(A_i,B_i)$ down to a semiedge.  We then prove that $G''$ is laminar.  But first we must give a more formal description of $G''$.

Let $N(a)$ and $\overline N(a)$ denote the strong neighbourhood and strong antineighbourhood of $a$ in $G'$, respectively.  Let $N(b)$ and $\overline N(b)$ denote the strong neighbourhood and strong antineighbourhood of $b$ in $G'$, respectively.  We construct the trigraph $G''$ from $G'-a-b$ as follows.

\begin{itemize}
\item For $i=1,\ldots,k$, we add semiadjacent $a_i$ and $b_i$, with $a_i$ strongly adjacent to $N(a)$ and strongly antiadjacent to $\overline N(a)$, and with $b_i$ strongly adjacent to $N(b)$ and strongly antiadjacent to $\overline N(b)$.
\item For $i=1,\ldots,\ell_1$, we add a vertex $a'_i$ strongly adjacent to $N(a)$ and strongly antiadjacent to $\overline N(a)$.
\item For $i=1,\ldots,\ell_2$, we add a vertex $b'_i$ strongly adjacent to $N(b)$ and strongly antiadjacent to $\overline N(b)$.
\item $A'=\{a_1,\ldots,a_k,a'_1,\ldots,a'_{\ell_1}\}$ is a strong clique, and so is $B'= \{b_1,\ldots,b_k,b'_1,\ldots,b'_{\ell_2}\}$.
\item $a_i$ and $b_j$ are strongly adjacent if $A_i$ is strongly complete to $B_j$, and strongly anticomplete otherwise.
\item $a_i$ and $b'_j$ are strongly adjacent if $b'_j$ is strongly complete to $A_i$, and strongly anticomplete otherwise.
\item $b_j$ and $a'_i$ are strongly adjacent if $a'_i$ is strongly complete to $B_j$, and strongly anticomplete otherwise.
\item $a'_i$ and $b'_j$ are strongly adjacent if $a'_i$ and $b'_j$ are strongly adjacent in $G$, semiadjacent if they are semiadjacent in $G$, and strongly antiadjacent if they are strongly antiadjacent in $G$. 
\end{itemize}
It is straightforward to confirm that $G'$ is an antithickening of $G''$, and that $G''$ is an antithickening of $G$.  Clearly $G''$ has more vertices than $G'$, so it suffices for us to prove that $G''$ is laminar.

Suppose there is a square-connected homogeneous pair of strong cliques $(A'',B'')$ in $G''$.  It is straightforward to confirm that $(A'',B'')$ must lie inside $A'\cup B'$.  Observe that both $G'$ and $G''$ must be non-degenerate, so we can assume $A''\subseteq A'$ and $B''\subseteq B'$.  It suffices, then, to prove that there is no square in $A'\cup B'$.  So suppose one exists, say $S=uvxy$ in cycle order with $u,v\in A'$ and $x,y\in B'$.

First suppose $S$ contains $a_1$ and $b_1$; assume $u=a_1$ and $y=b_1$.  Then it is easy to see that whether or not $v$ and $x$ appear in sets $A_i$ and/or $B_j$, there is a square $S'$ in $G$ that is partially but not completely contained in $A_1\cup B_1$, contradicting our choice of $(A_1,B_1)$.  The same conclusion applies if $u=a_1$ and $x=b_1$.  Therefore since at most one vertex of $S$ appears in any pair $(A_i,B_i)$, and there is no square in $(A\setminus (\cup_{i}A_i))\cup(B\setminus (\cup_{i}B_i))$, it follows that there is a square in $(A,B)$ partially but not completely contained in $A_i\cup B_i$ for some $i$, a contradiction.
\end{proof}

\subsection{Uniqueness}

Before we give an efficient algorithm for finding the optimal antithickening of a quasi-line trigraph, we must first prove uniqueness.

\begin{theorem}
Any connected non-degenerate trigraph $G$ has a unique optimal antithickening.
\end{theorem}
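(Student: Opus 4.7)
The plan is to show that any two optimal antithickenings $G'_1$ and $G'_2$ of $G$, with thickening maps $I_1$ and $I_2$, induce the same partition of $V(G)$; since the adjacencies of an antithickening are completely determined by its partition together with $G$ (via the thickening rules, which uniquely prescribe strong adjacencies, strong antiadjacencies, and semiedges based on the strong completeness/anticompleteness of parts in $G$), this coincidence of partitions forces $G'_1 = G'_2$.

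To pin down this partition canonically, I characterize the parts. For any semiedge $ab$ of $G'_j$, Corollary \ref{cor:imageissc} yields $(I_j(a), I_j(b))$ as either a pair of singletons (a semiedge of $G$) or a SCHPOSC of $G$, while Lemma \ref{lem:imageisdm} combined with the matching property of semiedges rules out the case where exactly one side is a singleton. For a non-semiedge vertex $v$ of $G'_j$, I claim $|I_j(v)| = 1$. Suppose otherwise and let $V' := I_j(v)$, a strongly homogeneous strong clique of $G$ of size $\geq 2$; I split it into $|V'|$ twin singletons to form an antithickening $G''_j$ with strictly more vertices. To contradict optimality, I verify $G''_j$ is still laminar. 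Given any would-be SCHPOSC $(A,B)$ in $G''_j$ (which is deletion-minimal), three cases arise: (i) $V' \cap (A \cup B) = \emptyset$, in which case $(A, B)$ is a SCHPOSC already in $G'_j$ (as $v$ sits outside $A \cup B$ and is strongly complete or strongly anticomplete to each by the non-semiedge hypothesis), contradicting laminarity; (ii) $V' \subseteq A$ (symmetrically $\subseteq B$), in which case contracting $V'$ back to $v$ yields a deletion-minimal HPOSC $((A \setminus V') \cup \{v\}, B)$ in $G'_j$, which contains a SCHPOSC by Lemma \ref{lem:inside}, contradicting laminarity; (iii) $V'$ meets both $A$ and $B$, where the vertices in $V' \cap B$ are strongly adjacent to all of $V' \cap A$ (clique $V'$) and, via the twin equality, inherit the same strong adjacencies to $A \setminus V'$ as $V' \cap A$ does, which by the strong clique $A$ means $V' \cap B$ is strongly complete to $A$, violating deletion-minimality.

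Once this singleton property is established, Lemma \ref{lem:contained} applied to each non-singleton $(I_j(a), I_j(b))$ (which is itself a SCHPOSC) forces it to be an inclusion-maximal SCHPOSC of $G$: any strictly larger SCHPOSC would be contained in some $(I_j(a'), I_j(b'))$, contradicting disjointness of the $I_j$-parts. Conversely, every inclusion-maximal SCHPOSC of $G$ is of this form. Non-degeneracy of $G$ (Corollary \ref{cor:skew}) forbids skew intersection of deletion-minimal HPOSCs, the earlier proposition shows that non-skew intersecting deletion-minimal HPOSCs intersect in both sides, and Proposition \ref{prop:scunion} then forces the union of intersecting SCHPOSCs to be a SCHPOSC; hence inclusion-maximal SCHPOSCs of $G$ are pairwise disjoint. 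The resulting canonical partition
\[
\bigl\{A, B : (A,B) \text{ an inclusion-maximal SCHPOSC of } G\bigr\} \;\cup\; \bigl\{\{v\} : v \text{ lies in no SCHPOSC of } G\bigr\}
\]
depends only on $G$, and since every optimal antithickening realizes it and is determined by it via the thickening rules, we conclude $G'_1 = G'_2$.

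The main obstacle is the splitting argument: while the non-twin-straddling cases lift cleanly to $G'_j$, the twin-straddling case (iii) must be ruled out by the rigid interaction of twin-equality, clique structure, and deletion-minimality, and one must be careful that contracting $V'$ back in case (ii) genuinely preserves the deletion-minimal HPOSC property, which relies on deletion-minimality of $(A,B)$ inside $G''_j$.
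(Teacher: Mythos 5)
Your proof is correct, but it takes a genuinely different route from the paper's. The paper proves uniqueness by induction on $|V(G)|$: it contracts a single square-connected pair $(A,B)$ to a semiedge to form $\tilde G$, invokes the inductive hypothesis on $\tilde G$, and uses Lemma \ref{lem:contained} to show that any optimal antithickening of $G$ factors through $\tilde G$. You instead give a direct, non-inductive characterization of the partition induced by any optimal antithickening: the non-singleton parts are exactly the sides of the inclusion-maximal square-connected pairs, which are pairwise disjoint by Corollary \ref{cor:skew} and Proposition \ref{prop:scunion}, so the partition (and hence the antithickening, which is determined by the partition) is canonical. The one ingredient you need beyond the paper's lemmas is that a vertex of $G'$ incident to no semiedge has a singleton image; the paper asserts this in passing in the proof of its algorithmic theorem (``otherwise $u$ and $v$ would not be together in $I'(a')$'') but never proves it, whereas your twin-splitting argument with the three-case laminarity check supplies the missing justification and is sound (case (iii) is correctly killed by the twin property forcing a vertex of $B$ to be strongly complete to $A$). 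Your approach buys an explicit description of the optimal antithickening that the paper only obtains afterwards via the correctness proof of its algorithm, at the cost of the extra splitting lemma; one small inaccuracy is your attribution of the ``exactly one side a singleton'' exclusion to the matching property of semiedges, when it really follows from the square requirement in deletion-minimality (or directly from Corollary \ref{cor:imageissc}), but the conclusion is right either way.
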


\begin{proof}
We proceed by induction on $k=|V(G)|$.  The theorem is clearly true for trigraphs on at most three vertices.  So assume the theorem holds for trigraph on fewer than $k$ vertices.

If $G$ is laminar then clearly $G$ is its own (unique) optimal antithickening and we are done.  Therefore by Lemma \ref{lem:inside} there is a square-connected homogeneous pair of cliques $(A,B)$ in $G$.  Let $\tilde G$ be the antithickening of $G$ reached by contracting $A$ into a vertex $\tilde a$ and contracting $B$ into a vertex $\tilde b$; note that $\tilde a \tilde b$ is a semiedge in $\tilde G$.

Inductively, $\tilde G$ has a unique optimal antithickening; call it $\tilde G'$ and note that $\tilde G'$ is also an antithickening of $G$.  Let $G'$ be any optimal antithickening of $G$; it suffices to show that $G'$ is an antithickening of $\tilde G$.
 Lemma \ref{lem:contained} tells us that there are semiadjacent vertices $a,b$ in $G'$ such that in $G$, $A\subseteq I(a)$ and $B\subseteq I(b)$.  Therefore we can define a thickening $I':V(G')\rightarrow 2^{V(\tilde G)}$ from $G'$ to $\tilde G$ as follows.  For $v\in V(G')$ not in $\{a,b\}$, set $I'(v)$ be equal to $I(v)$, i.e.\ the same as in the thickening from $G'$ to $G$.  Now set
\begin{eqnarray*}
I'(a) &=& (I(a)\setminus A ) \cup \{ \tilde a\}\\
I'(b) &=& (I(b)\setminus B ) \cup \{ \tilde b\}.
\end{eqnarray*}
It suffices to check that $I'$ is a valid thickening.  This is straightforward and we leave the details to the reader.
\end{proof}

\subsection{An algorithm}

We are now prepared to prove Theorem \ref{thm:mainantithickening}, which we restate.

\begin{theorem*}
Every connected non-degenerate trigraph has a unique optimal antithickening, which we can find in $O(m^2)$ time.
\end{theorem*}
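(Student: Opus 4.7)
Uniqueness has been established by the preceding theorem, so what remains is to exhibit an $O(m^2)$-time algorithm. Lemma \ref{lem:contained} combined with Corollary \ref{cor:imageissc} gives the structural handle we need: in the optimal antithickening $G'$, the nontrivial images $(I(a),I(b))$ at the semiedges of $G'$ are precisely the inclusion-maximal square-connected HPOSCs of $G$, and all other vertices of $G'$ correspond to single vertices of $G$. Moreover, Corollary \ref{cor:skew} and Proposition \ref{prop:scunion} together imply that in a non-degenerate trigraph distinct maximal SCHPOSCs are vertex-disjoint, so locating them and simultaneously contracting each to a semiedge produces an unambiguous $G'$.

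The algorithm I propose maintains a family $\mathcal C$ of pairwise disjoint SCHPOSCs discovered so far, along with a table recording, for each vertex of $G$, which pair in $\mathcal C$ (if any) contains it and on which side. Enumerate every strongly adjacent pair $(a_0,a_1)$ of $G$ and invoke the SCHPOSC algorithm on it. The subroutine either quits (if $(a_0,a_1)$ lies in no square) or returns a SCHPOSC $(A,B)$. In the latter case, use the table to detect any elements of $\mathcal C$ sharing a vertex with $A\cup B$; by Corollary \ref{cor:skew} and Proposition \ref{prop:scunion} any such overlap is non-skew and the sides must align, so we merge all of them with $(A,B)$ into a single SCHPOSC, update the table, and reinsert into $\mathcal C$. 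Once every strongly adjacent pair has been processed, contract each element of $\mathcal C$ simultaneously to a semiedge to obtain $G'$.

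Correctness: every element of $\mathcal C$ is a SCHPOSC, either trivially (fresh output of the subroutine) or by Proposition \ref{prop:scunion} (after a merge). Conversely, given a maximal SCHPOSC $(A^*,B^*)$ of $G$, square-connectedness guarantees that for each vertex $v\in A^*\cup B^*$ there is a strongly adjacent pair inside a square of $(A^*,B^*)$ through $v$; the SCHPOSC call on that pair returns a sub-SCHPOSC of $(A^*,B^*)$ covering $v$, and the merging rule assembles all these fragments into exactly $(A^*,B^*)$. Running time: the SCHPOSC subroutine costs $O(m)$ and is invoked at most $m$ times (once per strongly adjacent pair), giving $O(m^2)$; table lookups, merges, and the final parallel contraction contribute only lower-order cost. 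The main obstacle I anticipate is ensuring the merging step never conflates two genuinely distinct maximal SCHPOSCs — this is precisely where non-degeneracy is essential, since a skew intersection between two SCHPOSCs would force $G$ to be degenerate by Corollary \ref{cor:skew}, a case explicitly excluded from the hypothesis.
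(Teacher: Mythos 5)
Your proposal is correct and follows essentially the same route as the paper: run the SCHPOSC subroutine on each strongly adjacent pair, merge overlapping outputs via Proposition \ref{prop:scunion} (with Corollary \ref{cor:skew} ruling out skew overlaps), contract the resulting disjoint maximal square-connected pairs, and certify optimality by matching the contracted pairs against the semiedge images of the optimal antithickening using Lemma \ref{lem:contained} and Corollary \ref{cor:imageissc}. The running-time accounting ($O(m)$ per SCHPOSC call, $O(m)$ calls, lower-order bookkeeping) is also the paper's.
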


\begin{proof}
Let $G$ be a connected non-degenerate trigraph.  Our algorithm is as follows.
\begin{enumerate*}
\item For each pair $uv$ of strongly adjacent vertices contained in a square in $G$, do the following:
\begin{enumerate*}
\item Run the SCHPOSC algorithm to find a square-connected homogeneous pair of strong cliques $(A_{uv},B_{uv})$ such that $\{u,v\}\subseteq A_{uv}$, if one exists.
\item If $A_{uv}$ intersects some previously constructed $A_{u'v'}$, then set $A_{u'v'}:= A_{u'v'}\cup A_{uv}$ and set $B_{u'v'}:= B_{u'v'}\cup B_{uv}$, and forget the sets $A_{uv}$ and $B_{uv}$.  Take the analogous action if $A_{uv}$ intersects $B_{u'v'}$ or if $B_{uv}$ intersects $A_{u'v'}$ or $B_{u'v'}$.
\end{enumerate*}
\item We now have a set of pairwise disjoint square-connected homogeneous pairs of cliques $\{(A_{u_iv_i},\allowbreak B_{u_iv_i})\}_{i=1}^k$.  To construct the antithickening $G'$, we contract each $A_{u_iv_i}$ and $B_{u_iv_i}$ into semiadjacent vertices $a_i$ and $b_i$.
\end{enumerate*}
We must now prove correctness and running time.  Bounding the running time is simple.  We already know that running the SCHPOSC algorithm for $uv$ takes $O(m)$ time.  We can easily detect intersection between $A_{uv}$ and $A_{u'v'}$ because at any given point, each vertex of $G$ is in at most one set $A_{u'v'}$.  So we can perform Step 1 in $O(m^2)$ time.  Since $k<|V(G)|$, we can easily run Step 2 in $O(nm)$ time.  Thus the overall running time is $O(m^2)$.

We now prove that the output trigraph $G'$ is the optimal antithickening $G''$.  Let $I$ be the thickening from $G'$ to $G$, and let $I'$ be the thickening from $G''$ to $G$.  From Proposition \ref{prop:scunion} we know that in Step 2(b), if $A_{uv}$ intersects $A_{u'v'}$ then $(A_{u'v'}\cup A_{uv}, B_{u'v'}\cup B_{uv})$ is a square-connected homogeneous pair of strong cliques.  Therefore each contracted pair $(A_{u_iv_i},B_{u_iv_i})$ is a square-connected homogeneous pair of strong cliques.  In particular, Lemma \ref{lem:contained} tells us that if there exist vertices $u$ and $v$ in $V(G)$ and $a$ in $V(G')$ such that $\{u,v\}\subseteq I(a)$, then there exists $a'$ in $V(G'')$ such that $\{u,v\}\in I'(a')$.

It remains to show the converse, so suppose there exist vertices $u$ and $v$ in $V(G)$ and $a'$ in $V(G'')$, such that $\{u,v\} \in I'(a')$.  Since $G'$ is optimal, $a'$ must be semiadjacent to some vertex $b'$, otherwise $u$ and $v$ would not be together in $I'(a')$.  By Corollary \ref{cor:imageissc}, $(I(a'),I(b'))$ is a square-connected homogeneous pair of strong cliques in $G$, so it follows from Lemma \ref{lem:contained} that there is some $a\in V(G')$ such that $\{u,v\}\subseteq I(a)$.  Therefore two vertices of $G$ have the same preimage in $G'$ if and only if they have the same preimage in $G''$; thus $G'$ must be isomorphic to $G''$, and $G'$ must be an optimal antithickening of $G$.
\end{proof}

\section{Conclusion}

While we do not rule out the possibility that the optimal antithickening can be found in $o(m^2)$ time, a faster algorithm would require a different approach:  Consider the graph consisting of two disjoint cliques of size $k$, connected by a perfect matching.  While the optimal antithickening consists of a single semiedge, the graph has $\Theta(k^2)=\Theta(m)$ minimal square-connected homogeneous pairs of cliques, each containing four vertices.  Therefore a direct approach using the SCHPOSC algorithm is condemned to take at least $\Theta(m^2)$ time.  

In a forthcoming paper, the second author will give an $O(nm)$-time algorithm for constructing an optimal strip decomposition of a composition of linear interval strips.  This algorithm is most useful when applied to laminar trigraphs, so the speed of our antithickening algorithm represents a bottleneck in the decomposition.  For these reasons, a faster antithickening algorithm would be both surprising and useful.

\section{Acknowledgements}

The authors are grateful to Gianpaolo Oriolo for useful discussions about this work.

\bibliography{masterbib}
\end{document}